\date{}
\newcommand{\D}{\mathcal{D}}
\newcommand{\N}{\mathcal{N}}
\newcommand{\J}{\mathcal{J}}
\newcommand{\predJ}{\hat{\mathcal{J}}}
\newcommand{\A}{\mathcal{A}}
\renewcommand{\S}{\mathcal{S}}
\renewcommand{\P}{\mathcal{P}}
\newcommand{\OPT}{\textsc{Opt}}
\DeclareMathOperator*{\argmax}{argmax}
\theoremstyle{plain}
\newtheorem*{theorem*}{Theorem}
\newtheorem{theorem}{Theorem}[section]
\newtheorem{lemma}[theorem]{Lemma}
\newtheorem{claim}[theorem]{Claim}
\theoremstyle{definition}
\newtheorem{definition}[theorem]{Definition}
\theoremstyle{remark}
\newcommand{\tl}{t_{\lambda}}
\newcommand{\laps}{\textsc{LAP}}
\newcommand{\INP}{\item[{\bf Input:}]}
\title{Learning-Augmented Online Packet Scheduling with Deadlines}
\begin{document}
\author[1]{Ya-Chun Liang\thanks{ {\tt yl5608@columbia.edu}.}}
\author[1]{Clifford Stein\thanks{ {\tt cliff@ieor.columbia.edu}.}}
\author[1]{Hao-Ting Wei\thanks{ {\tt hw2738@columbia.edu}.}}
\affil[1]{Department of Industrial Engineering and Operations Research, Columbia University}

\maketitle

\begin{abstract}
The modern network aims to prioritize critical traffic over non-critical traffic and effectively manage traffic flow. This necessitates proper buffer management to prevent the loss of crucial traffic while minimizing the impact on non-critical traffic. Therefore, the algorithm's objective is to control which packets to transmit and which to discard at each step. In this study, we initiate the learning-augmented online packet scheduling with deadlines and provide a novel algorithmic framework to cope with the prediction. We show that when the prediction error is small, our algorithm improves the competitive ratio while still maintaining a bounded competitive ratio, regardless of the prediction error.
\end{abstract}

\section{Introduction}
In today's interconnected world, efficient and reliable data transmission is paramount. Businesses, governments, and individuals depend on seamless communication for various time-sensitive applications. Managing network traffic flow is critical to ensure that essential data takes precedence over non-critical information. One approach to achieving this is through the use of Quality of Service (QoS) switches~\cite{shin2001quality, zhang2004end, etoh2005advances}.

In the context of modern networks, the problem of Buffer Management in QoS switches arises as a significant challenge. The objective is to develop effective algorithms that prioritize critical traffic, safeguarding it from loss, while minimizing the impact on non-critical data. 
To emphasize the significance of this challenge, consider a data center \cite{ranjan2002qos, beloglazov2012managing} that processes incoming requests from users. Each request corresponds to a packet, each with a specific deadline; a time limit within which it must be transmitted to ensure a responsive user experience. The challenge at hand involves making real-time decisions on which packets to transmit while adhering to their respective deadlines. For instance, a delay in transmitting a critical update to a cloud-based application can lead to service disruptions, affecting user satisfaction and business operations. Similarly, 
timely delivery of multimedia content is essential for seamless streaming and online gaming experiences. A delay in transmitting video or audio packets can result in buffering or lag, affecting user engagement.

To achieve good performance, we design a scheduling algorithm, which decides  which packet should be processed at each step. As a consequence, some packets will miss their deadlines and need to be discarded. To ensure that the most critical traffic is processed, our objective is to maximize the total weight of processed jobs. To evaluate the performance of such algorithms, we use competitive analysis, a standard metric in online algorithms. The competitive ratio is used as a measure, representing the ratio of the optimal offline cost to the cost achieved by online algorithms in worst-case scenarios.

However, traditional competitive analysis can be overly pessimistic, as it considers only the worst-case scenario and overlooks the potential benefits of leveraging historical data to predict future inputs. To address this limitation, a new field of algorithms with predictions has emerged. These algorithms assume that the designer has access to predictions about the input, aiming to improve competitive ratios when predictions are accurate (consistency) while maintaining bounded competitive ratios even if the predictions are inaccurate (robustness).

In this paper, we study the packet scheduling problem with deadlines that we augment with predictions. This problem has not been previously studied with predictions. 
Much of the previous work on packet scheduling with deadlines has considered various special cases, which make assumptions about the input. These include \emph{agreeable deadlines}, where job's deadlines are (weakly) increasing in their arrival times; \emph{$k$-uniform delay}, where each job has the same $k$ amount of time available for processing; and \emph{$k$-bounded delay}, where each job has at most $k$ units of time available for processing, where $k$ is a positive integer. 
None of these cases have been studied with predictions.



\subsection{Our results} In an instance of  the \emph{packet scheduling with deadlines} problem, a collection $\J$ of $n$ jobs with unit processing time is given. Each job $(r_j, d_j, w_j) \in \J$ consists of a release time $r_j$, a nonnegative weight $w_j$, and an integer deadline $d_j$, indicating the time by which it must be processed. The goal is to find a schedule that maximizes the total weight of the processed jobs. In the \emph{learning-augmented packet scheduling with deadlines} problem, the algorithm is provided with a collection $\hat{\J}$ of $\hat{n}$ predicted jobs $(\hat{r}_j, \hat{d}_j, \hat{w}_j)$ at time $t=0$.

\noindent\textbf{Optimal consistency and bounded robustness.} Our first goal is to design an algorithm that strikes a good balance between consistency (competitive ratio when the predictions are exactly correct) and robustness (competitive ratio when the predictions are arbitrarily wrong). Our first result shows that for the learning-augmented packet scheduling with deadlines problem, given an offline prediction and an online algorithm with a competitive ratio of $\gamma_{\texttt{on}}$, there exists a learning-augmented algorithm that achieves 1 consistency (Lemma~\ref{lem:cons}) and 
$\gamma_{\texttt{on}} + c$ 
robustness (Lemma~\ref{lem:robust}), where $c$ is a constant. We complement this result by showing that there is a necessary trade-off between consistency and robustness (Section \ref{sec:lower-bound}).

\noindent\textbf{The competitive ratio as a function of the prediction error.} The second main result is that our algorithm achieves a competitive ratio that smoothly interpolates from consistency to 
robustness as a function of the prediction error (Lemma~\ref{lem:smooth}). 
We define the prediction error $\eta = \max_{t\in [T]} \frac{W(\OPT(\J_{\le t})^{(\le t)})}{W(\hat{\S}(\J)^{(\le t)})}$, which represents the maximum ratio between the total weight of the processed jobs in the optimal schedule for all released jobs until time $t$ and the processed weights obtained by following the choices of the optimal schedule for the prediction until time $t$, considering all possible values of $t$. Here, $\J_{\le t}$ and $\J^{(\le t)}$ denote the set of jobs in $\J$ with release time at most $t$ and the set of processed jobs in the interval $[0,t]$, respectively. (See Section~\ref{sec:prelim} for details of the discussion about the prediction error $\eta$).

\noindent\textbf{Experiments.} In Section~\ref{sec:exp}, we empirically assess the performance of our proposed algorithm by comparing it with different benchmark algorithms. We specifically investigate the impact of predictions of both weights and deadlines. Our results demonstrate that, when the prediction error is small, our algorithm outperforms the benchmark algorithms empirically on both real and synthetic datasets in the context of online packet scheduling with deadlines.

\subsection{Technical overview}
This paper proposes a general  algorithmic framework for learning-augmented online packet scheduling with deadlines. This framework relies on two critical components: an offline prediction and the online algorithm \textsc{OnlineAlg} without prediction given as input. For the offline setting of online packet scheduling with deadlines, one can solve a maximum weight bipartite matching problem to obtain the optimal solution \cite{kesselman2004buffer}. 
Various online algorithms are available for different variants of online packet scheduling with deadlines.  See   \cite{goldwasser2010survey} for a survey.

Initially, the algorithm computes an offline schedule $\hat{S}$ based on the prediction. While this schedule is optimal for perfect predictions, it may not be so in the presence of errors. To address this, the algorithm sets a threshold and conducts a local test to track the performance. 
At each time $t$,
the algorithm first checks whether $\hat{\S}(\J)^{(t)} \in \P $,
where $\hat{\S}(\J)^{(t)}$ is the processed job corresponding to $\hat{S}$
at time $t$, and $\P$ denotes the processed jobs of the algorithm until time $t-1$.  
If $\hat{\S}(\J)^{(t)}$ has not been processed, the algorithm calculates two quantities and derives the ratio between them. 
One is $W(\OPT(\J_{\le t})^{(\le t)})$, representing the total weight of the processed jobs in the optimal schedule for all released jobs until time $t$.
The other is the total weight of the processed jobs of the algorithm until time $t-1$, denoted by $W(\P)$, with the addition of $W(\hat{\S}(\J)^{(t)})$.

Either $\hat{\S}(\J)^{(t)} \in \P$, or if the ratio exceeds the threshold, the algorithm immediately switches to \textsc{OnlineAlg} until $\hat{\S}(\J)^{(t)} \notin \P$ and the local test ratio becomes no greater than the threshold. After that, the algorithm resumes following the choice of the predicted schedule.
This flexibility to seamlessly switch between offline and online decisions provides more opportunities to maintain superior performance when handling longer input sequences, setting our algorithm apart from many existing learning-augmented algorithms that strictly adhere to the online decisions after a transition~\cite{DBLP:journals/jacm/LykourisV21,antoniadis2021novel}.
 
Our analysis first shows that with an appropriate benchmark for the local test, we can ensure that when the prediction is correct, the algorithm will always follow the choice of the predicted schedule; that is, the output schedule is optimal. Additionally, we show that our proposed algorithm achieves constant robustness. 
The analysis involves partitioning the entire time interval into two smaller intervals, namely $[0, \tl-1]$ and $[\tl, T]$, 
where $\tl-1$ is the last time that the local test ratio is not greater than the threshold.
%
%
The high-level idea behind the analysis is as follows: suppose a job in the optimal solution is processed in the interval $[\tl, T]$. In this case, it is either processed by our proposed algorithm in the interval $[0,\tl-1]$ or feasible to be processed by our proposed algorithm. In either case, the algorithm does not incur significant losses even if a wrong decision is made in the interval $[0,\tl-1]$. Next, consider the case where a job in the optimal solution is processed in the interval $[0, \tl-1]$. If the deadline is no later than $\tl$, the competitive ratio can be safeguarded by the guarantees of the local test, which ensures the performance of our algorithm until time $\tl - 1$. Otherwise, if its deadline is later than $\tl$, it is feasible to schedule it with our algorithm, ensuring the algorithm's performance.


\subsection{Related works}
\noindent\textbf{Online packet scheduling with deadlines.}
The problem of online packet scheduling with deadlines was initially introduced by \citet{hajek2001competitiveness} and \citet{kesselman2004buffer}. There are many studies 
for deterministic algorithms, including both the general model and different special cases \cite{jez2012online, kesselman2004buffer, chin2006online, bohm2019online}. Recently, \citet{vesely2022competitive} introduced an optimal $\phi \approx 1.618$-competitive algorithm for the general model, which matches the lower bound for deterministic algorithms \cite{andelman2003competitive, chin2003online, kesselman2004buffer}.


For the randomized algorithm, different adversary models and results have been explored as presented in studies such as \cite{chin2006online, bienkowski2011randomized, jez2011one}.
Additionally, other variants of the online packet scheduling with deadlines problem have been investigated~\cite{bienkowski2013collecting, bohm2019online}. Notably, in \cite{bohm2019online}, a lower bound of a 1.25-competitive ratio was proven for any constant lookahead, where the algorithm can access future input information for the next few steps.
More information about online packet scheduling with deadlines and related problems can be referred to in survey papers~\cite{goldwasser2010survey, vesely2021packet}.

\noindent\textbf{Learning-augmented algorithms.}
In recent years, a significant body of research has emerged on algorithms with prediction models, which seeks to address the challenge of partial information often available to decision-makers. This line of work utilizes machine-learned predictions about the inputs to design algorithms that achieve stronger bounds (consistency) when provided with accurate predictions. Furthermore, this research aims to maintain a worst-case guarantee (robustness) that holds even in the presence of inaccurate predictions. Optimization problems investigated under this framework include online paging~\cite{DBLP:journals/jacm/LykourisV21,Purohit18improving}, routing~\cite{https://doi.org/10.48550/arxiv.2205.12850}, matching~\cite{Antoniadis20Secretary,https://doi.org/10.48550/arxiv.2206.11397}, and graph problems~\cite{azar2022online,xu2022learning, bernardini2022universal} (see~\cite{MitzenmacherV22} for a survey of works in this area). Many different
scheduling problems have been studied with predictions~\cite{antoniadis2021novel,balkanski2022scheduling,balkanski2023energyefficient,bamas2020learning, Lattanzi2020online,lindermayr2022permutation,Mitzenmacher20Scheduling}. 
See the survey website \cite{algorithms_with_predictions} for more information.
\section{Preliminaries}
\label{sec:prelim}
We first describe the online packet scheduling with deadlines problem. An instance of online packet scheduling with deadlines is described by a collection $\J$ of $n$ unit-length jobs. Each job, denoted as $(r_j, d_j, w_j) \in \J$, is defined by a release time $r_j$, a deadline $d_j$, and a weight $w_j$, where $r_j$ and $d_j$ are integers, and $w_j$ is a non-negative real number. In addition, we assume distinct weights for all packets, as any instance can be adjusted to have unique weights through infinitesimal perturbation while maintaining the competitive ratio. Let $T$ be a finite time horizon of the problem. At each time $t \in [T]$, only one job can be processed, and it must satisfy $r_j \le t$ and $d_j-1 \ge t$; that is, a job can only be processed after its arrival and before its deadline. 
The objective is to process a set of jobs of maximum
total weight. We also let $\J_{\le t}$ be a set of jobs in $\J$ with release time at most $t$. 

We refer to a collection of processed jobs in the interval $[0, T]$ as a \emph{schedule}.
Additionally, we let $\S = (\S_1, \cdots, \S_T)$ denote a set of \emph{choices} for processing jobs.
To determine $\mathcal{S}_i$ at each time $i \in [T]$, we consider the information of all available jobs, including their arrival times, deadlines, and weights.
For any realization $\J$ of the job instance, we define $\S(\J)$ as the corresponding jobs processed according to $\S$ (adding zero-weight dummy jobs if there are no jobs to be processed). Thus, $\S(\J)$ represents a schedule for the job instance $\J$.
Finally, we let $W(\S(\J)) = \sum_{j\in \S(\J)} w_j$ be the total weight of the jobs in $\S(\J)$.
\begin{definition}[Dominance Relation \cite{jez2012online}]
For any two jobs $j$ and $j'$ which are released but not yet processed at a certain time, $j$ is said to dominate $j'$ if 
$w_{j} > w_{j'}$ and $d_{j} \le d_{j'}$.

\end{definition}
According to \cite{jez2012online}, any schedule can be arranged in canonical order in the following way. Assign the earliest deadline feasible (i.e., a job that has been released but not yet processed and is not dominated by any other job) to each successive step. The resulting schedule is referred to as a \emph{canonical schedule}. For any job instance $\J$, let $\OPT(\J)$ be the optimal schedule for $\J$ and $\OPT(\J_{\le t})$ be the optimal schedule for $\J_{\le t}$, respectively. 
In the following, we let an optimal schedule be a canonical schedule. 


We also let $\OPT(\J)^{(t)}$  and $\OPT(\J)^{(\le t)}$ be the job processed at time $t$ and the set of jobs processed in the interval $[0,t]$ of $\OPT(\J)$, respectively (add a dummy zero-weight job if there is no job processed).
Likewise, for any schedule $\S(\J)$, we let $\S(\J)^{(t)}$  and $\S(\J)^{(\le t)}$ denote the job processed 
at time $t$ and the set of jobs processed in the interval $[0,t]$ of $\S(\J)$, respectively.
Also, for any algorithm $\A$ and job instance $\J$, we let $\A(\J)$ be the output schedule of running $\A$ on $\J$. 
Finally, for any job instance $\J$, we let $\D (t, \J)$ be the set of expired jobs at time $t$; that is, $d_j < t + 1$ and which have not been processed. 

\noindent\textbf{The learning-augmented online packet scheduling with deadlines problem.} The first important modeling task in the learning-augmented framework is to choose the prediction. To answer this question, we augment the online packet scheduling with deadlines problem with predictions regarding future job arrivals, and we call this problem the learning-augmented online packet scheduling with deadlines problem. The algorithm is given a prediction $\predJ = \{(\hat{r_j},\hat{d_j}, \hat{w_j})\}$ regarding the jobs $\J = \{(r_j, d_j, w_j)\}$ that arrive online (note that we do not necessarily have $|\J| = |\predJ|$). Here, one may consider a prediction scenario in which, at time $t$, the algorithm has access to extra information about arrivals in a few future steps, essentially representing a constant lookahead setting. However, according to \cite{bohm2019online}, there exists a strong 1.25-competitive lower bound for any constant lookahead. Therefore, to achieve better consistency, we adopt a model in which the algorithm is given with predictions about the entire input sequence in the beginning.

Another important modeling task in the learning-augmented framework is to design an appropriate measure for the prediction error. With the definition of \emph{schedule} and a set of \emph{choices}, we denote $\hat{\S}$ as the optimal choices for the predicted set of jobs $\predJ$, represented as $\hat{\S}(\predJ) = \OPT(\predJ)$.
We will now define the prediction error,
$$\eta(\J, \predJ): = \max_{t\in [T]} \frac{W(\OPT(\J_{\le t})^{(\le t)})}{W(\hat{\S}(\J)^{(\le t)})} \ ,$$ 
as the maximum ratio between the total weight of the processed jobs in the interval $[0, t]$ of the optimal schedule for all released jobs until time $t$ and the processed weights obtained by following the choices of the optimal schedule for the prediction until time $t$, considering all possible values of $t$. In the following, we use $\eta$ to denote the error when $\J$ and $\predJ$ are clear. 

\noindent\textbf{Choice of the prediction error.}
We define the prediction error as the $\ell_\infty$ norm of the ratio between the total weight of the processed jobs in the interval $[0,t]$ of the optimal schedule for all released jobs until time $t$ and the processed weights by following the choices of the optimal schedule for the prediction until time $t$, considering all possible $t$. Other types of prediction errors, such as the $\ell_1$ norm of the weight difference between the prediction and the realization, may also be considered, but there are tradeoffs between different errors. For instance, the $\ell_1$ norm does not distinguish between instances with one large prediction error and those with many small prediction errors. The $\ell_\infty$ norm, on the other hand, does not distinguish between one large prediction error and many large prediction errors.

The reason for choosing this error measurement is that the competitive ratio is highly sensitive to both the weight and deadline of jobs, making it crucial to capture their effects accurately. Therefore, we use the $\ell_\infty$ norm of all time steps to measure the prediction error. We believe that our choice of error captures the essence of the problem. However, we also recognize that other error measurements could be explored as interesting directions for future research. 
For instance, our error measure is sensitive to small shifts in the job's release time, such as when the realization occurs one time unit earlier than predicted. In this case, although the predictions capture the pattern of realization, it results in a large error based on the proposed error measure.
It would be worthwhile to investigate an alternative error measure as in \cite{bamas2020learning} that can accommodate small shifts in job release time.

\noindent\textbf{Performance metrics.} The standard evaluation metrics for an online algorithm with predictions are its consistency, robustness, and competitive ratio as a function of the prediction error \cite{DBLP:journals/jacm/LykourisV21}.
We let a function $c$ of error
$\eta$, where $\eta \ge 1$ using a given predictor such that $c(\eta)$ represents the worst-case competitive ratio given that the error is at most $\eta$.
We say an algorithm $\A$ is $\alpha$-consistent if $c(1) \le \alpha$ (competitive ratio when the prediction is exactly correct) and we say $\A$ is $\beta$-robust if for all $\eta \ge 1, c(\eta) \le \beta$ (competitive ratio when the error is arbitrarily large). We say that the competitive ratio of $\A$ is smooth if it smoothly degrades from $\alpha$ to $\beta$ as the prediction error $\eta$ grows.



\section{Algorithms}
\label{sec:algo_deadlines}

\paragraph{Warm up: A simple consistent, smooth, non-robust algorithm.}
We first show that an algorithm that blindly follows the 
predictions is $\eta$-competitive. Clearly, when $\J = \predJ$, since the algorithm follows the choices for the prediction, it returns an optimal schedule. However, this approach is not robust, as the competitive ratio can be arbitrarily large in case of incorrect predictions.
\begin{algorithm}[H]
\caption{\textsc{A simple 1-consistent, smooth algorithm but not robust} }
\label{alg:pred}
\begin{algorithmic}[1]
\INP A set of predicted jobs $\predJ$.
\STATE Compute a set of optimal choices $\hat{\S}$ for the predicted jobs $\predJ$, where $\hat{\S}(\predJ) = \OPT(\predJ)$
\STATE \textbf{return} $\hat{\S}(\J)$
\end{algorithmic}   
\end{algorithm}

\begin{lemma}
    Algorithm~\ref{alg:pred} is $\eta$-competitive.
\end{lemma}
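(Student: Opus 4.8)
The plan is to read off the claimed bound directly from the definition of the prediction error by evaluating it at the terminal time $t = T$. First I would fix an instance $\J$ together with its prediction $\predJ$ and record what the algorithm actually outputs: Algorithm~\ref{alg:pred} computes the optimal choices $\hat{\S}$ for $\predJ$ and returns the schedule $\hat{\S}(\J)$, so the total processed weight of the algorithm on this instance is exactly $W(\hat{\S}(\J))$. Since we are maximizing weight, the competitive ratio on this instance is $W(\OPT(\J))/W(\hat{\S}(\J))$, and the goal is to bound this quantity by $\eta = \eta(\J,\predJ)$.

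The key observation is that the maximum in the definition $\eta = \max_{t \in [T]} W(\OPT(\J_{\le t})^{(\le t)})/W(\hat{\S}(\J)^{(\le t)})$ dominates each individual term, and in particular the term at $t = T$. I would therefore identify that term with the competitive ratio. Because $T$ is the time horizon, every released job satisfies $r_j \le T$, so $\J_{\le T} = \J$; and since no job is processed outside $[0,T]$, the superscript $(\le T)$ imposes no restriction, giving $\OPT(\J_{\le T})^{(\le T)} = \OPT(\J)$ and $\hat{\S}(\J)^{(\le T)} = \hat{\S}(\J)$. Substituting these identities, the $t = T$ term equals $W(\OPT(\J))/W(\hat{\S}(\J))$, and hence $W(\OPT(\J))/W(\hat{\S}(\J)) \le \eta$, which is the claim.

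There is essentially no hard step here; the content is entirely in lining up the definitions, and the only delicate points are the two boundary identifications above (that $\J_{\le T}$ is all of $\J$ and that restricting to $[0,T]$ is vacuous) together with a mild non-degeneracy assumption that $W(\hat{\S}(\J)) > 0$ so the ratios are well defined. The latter holds whenever the instance contains at least one schedulable positive-weight job, which I would assume throughout, the remaining case being trivial since both schedules then have weight zero. I would close by noting, consistent with the warm-up discussion, that when $\J = \predJ$ we have $\hat{\S}(\J) = \OPT(\J)$, so $\eta = 1$ and the bound recovers $1$-consistency, while for general instances it degrades gracefully as $\eta$ grows.
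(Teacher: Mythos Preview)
Your argument is correct and matches the paper's proof: both bound the competitive ratio $W(\OPT(\J))/W(\hat{\S}(\J))$ by recognizing it as the $t=T$ term of the defining maximum for $\eta$, using $\J_{\le T}=\J$ and $(\le T)$ being vacuous. You are simply more explicit about these boundary identifications than the paper, which states the inequality directly.
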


\begin{proof}
Since $\hat{\S}(\J)$ follows the same choices as $\OPT(\predJ)$, we have 
\begin{align*}
    \frac{W(\OPT(\J))}{W(\hat{\S}(\J))} &\le \max_{t\in [T]} \frac{W(\OPT(\J_{\le t})^{(\le t)})}{W(\hat{\S}(\J)^{(\le t)})}  = \eta\ ,
\end{align*}
where the last equality comes from the definition of $\eta$.
\end{proof}
The main problem with Algorithm~\ref{alg:pred} is its lack of robustness. Particularly, the competitive ratio can be unbounded if the prediction is completely wrong. Our next step is to robustify the algorithm to achieve a bounded competitive ratio and maintain its consistency at the same time.

\begin{figure*}
    \centering 
    \includegraphics[width = 10 cm]{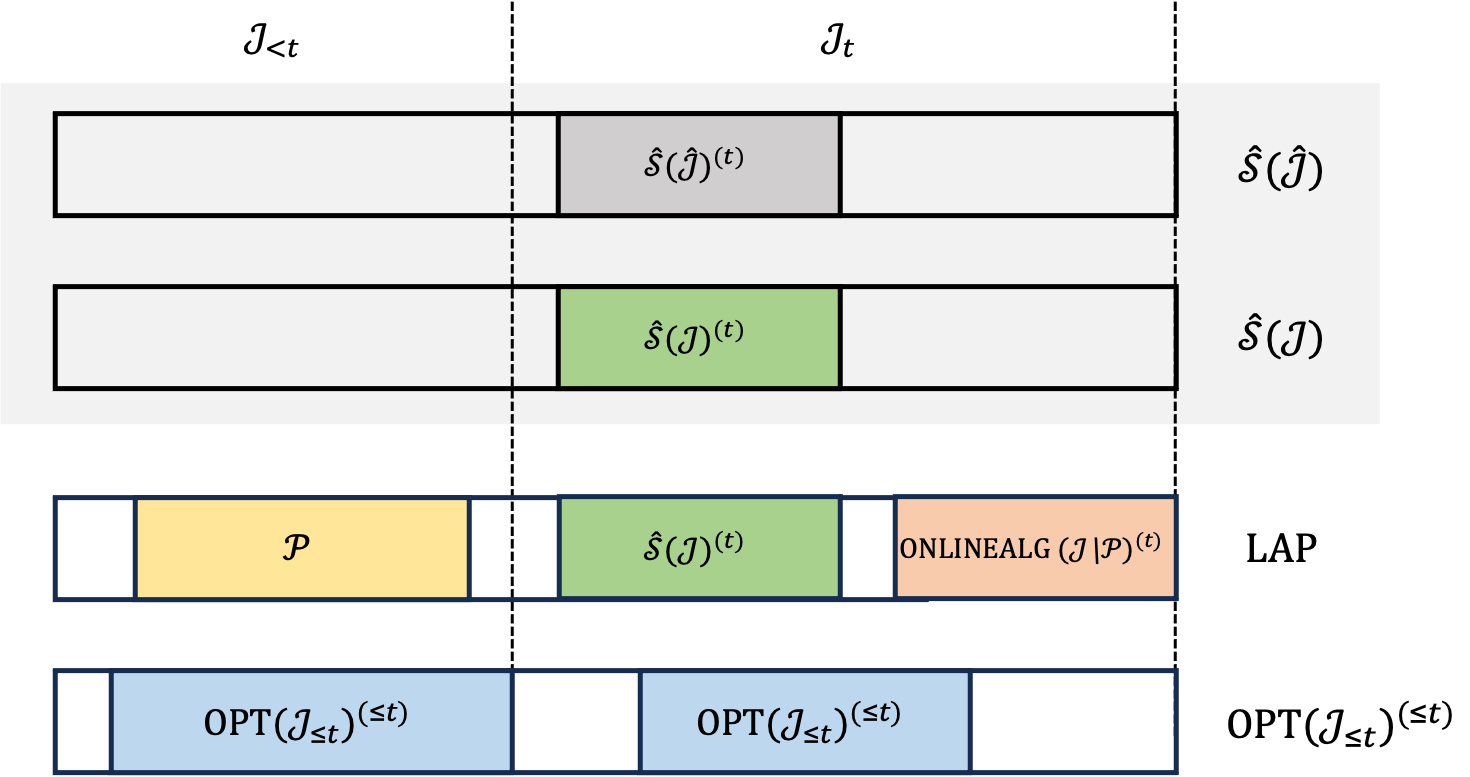} 
    \caption{Illustration of one iteration of the local test (Lines 4-7) of the \laps\ algorithm. The first two rows with a gray background represent the job following the predicted schedule $\hat{\S}(\predJ)^{(t)}$ and the corresponding job in realization $\hat{\S}(\J)^{(t)}$ at time $t$, while the last two rows indicate the processed jobs $\P$ (yellow) as well as two potential operations for \laps\ and the jobs in $\OPT(\J_{\le t})^{(\le t)}$. 
    For each local test at time $t$, \laps \ compares $W(\OPT(\J_{\le t})^{(\le t)})$ with $W(\P \cup \hat{\S}(\J)^{(t)})$ to decide whether \laps \ processes either $\hat{\S}(\J)^{(t)}$ (green) or $\textsc{OnlineAlg}(\J \setminus \P )^{(t)}$  (orange). 
    In this example, we assume $\hat{\S}(\J)^{(t)} \notin \P$ and $\D (t,\J) = \emptyset$.}
    \label{fig:laps}
 \end{figure*}

\begin{algorithm}[H]
    \caption{\textsc{Learning-Augmented Online Packet Scheduling with Deadlines} \ (\laps)}
    \label{alg-general}
\begin{algorithmic}[1]
 \INP A set of predicted jobs $\predJ$, an online algorithm \textsc{OnlineAlg}, and a threshold $\rho \ge 1$.
    \STATE Compute a set of optimal choices $\hat{\S}$ for the predicted jobs $\predJ$, where $\hat{\S}(\predJ) = \OPT(\predJ)$
    \STATE $\P = \emptyset$
    \STATE \textbf{for} $t \ge 0$ \textbf{do}
    \STATE \quad \textbf{if} $\hat{\S}(\J)^{(t)} \notin \P$ \textbf{and} $\frac{W(\OPT(\J_{\le t})^{(\le t)})}{W(\P \cup \hat{\S}(\J)^{(t)})} \le \rho$ \textbf{then} 
    \STATE \quad \quad $\P = \P \cup \hat{\S}(\J)^{(t)}$
    \STATE \quad \textbf{else}
    \STATE \quad \quad $\P= \P \cup \textsc{OnlineAlg}(\J \setminus (\P \cup \D (t,\J)) )^{(t)}$
    \STATE \textbf{return} $\P$
\end{algorithmic}
\end{algorithm}

\subsection{Description of the algorithm}

The proposed algorithm, known as \laps, is a general algorithm designed for the online packet scheduling problem with deadlines. The algorithm takes a set of predicted jobs $\predJ$, a threshold $\rho$, and an online algorithm \textsc{OnlineAlg} that solves the online packet scheduling with deadlines problem without predictions as inputs. Initially, \laps \ computes a set of choices of the optimal schedule for the predicted jobs $\predJ$, namely, $\hat{\S}$ (Line 1). 
At each time $t$, \laps \ performs a local test to determine whether to follow $\hat{\S}$ or switch to \textsc{OnlineAlg} (Lines 4-7). 
The local test first checks whether $\hat{\S}(\J)^{(t)} \in \P $. If not, \laps \ computes the ratio of the total weight of the processed jobs in the interval $[0,t]$ for the optimal schedule of all the released jobs until time $t$ and the current processed weights 
until time $t-1$, with the addition of the weight of the processed job corresponding to the predicted choice at time $t$.
See \cref{fig:laps} for an illustration of an iteration of the local test.
Note that \laps \ could switch back and forth between following the prediction or \textsc{OnlineAlg}.

The underlying principle of \laps \ is to switch to \textsc{OnlineAlg} when $\hat{\S}(\J)^{(t)} \in \P $ or the current ratio is larger than the threshold ($> \rho$) to prevent the competitive ratio from worsening.
However, if $\hat{\S}(\J)^{(t)} \notin \P $ and the current ratio is no greater than $\rho$, the algorithm will continue to trust the prediction and follow $\hat{\S}$ until a significant deviation occurs. The formal description of \laps \ is provided in Algorithm~\ref{alg-general}.

\subsection{Analysis of the algorithm}
The following paragraph analyzes the performance of \laps. Initially, we demonstrate that \laps \ is 1-consistent (Lemma~\ref{lem:cons}). Subsequently, we establish that \laps \ is $\eta$-competitive under the condition that $\eta \le \rho$ (Lemma~\ref{lem:smooth}). Finally, we investigate the scenario where $\eta > \rho$ and demonstrate that \laps \ maintains a bounded competitive ratio of $ \rho + \gamma_{\texttt{on}} + 1$ (Lemma~\ref{lem:robust}).

We begin with an important observation of the optimal solution.
\begin{lemma}\label{lem:localtest}
For every $t\in [T]$, $W(\OPT(\J)^{(\le t)}) \ge W(\OPT(\J_{\le t})^{(\le t)})$.  
\end{lemma}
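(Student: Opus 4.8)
The plan is to compare the two \emph{canonical} schedules $\OPT(\J)$ and $\OPT(\J_{\le t})$ and to reduce the inequality to a single exchange. First I would record the crucial structural fact that only jobs of $\J_{\le t}$ can occupy slots in $[0,t]$ (a job with release time $>t$ cannot be processed at or before $t$), so both $\OPT(\J)^{(\le t)}$ and $\OPT(\J_{\le t})^{(\le t)}$ are subsets of $\J_{\le t}$. Writing $S,S'$ for the job sets processed by $\OPT(\J),\OPT(\J_{\le t})$, I would split each along time $t$: $\OPT(\J)$ processes $A:=\OPT(\J)^{(\le t)}$ in $[0,t]$ and $C:=S\setminus A$ afterwards, while $\OPT(\J_{\le t})$ processes $B:=\OPT(\J_{\le t})^{(\le t)}$ in $[0,t]$ and $R:=S'\setminus B$ afterwards. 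The goal is exactly $W(A)\ge W(B)$.

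The endgame would be a one-line exchange, \emph{provided} I can show $B\cap C=\emptyset$. Indeed, if no job of $B$ is processed by $\OPT(\J)$ after time $t$, then the schedule that runs $\OPT(\J_{\le t})$ on $[0,t]$ (processing $B$) and $\OPT(\J)$ on $(t,T]$ (processing $C$) is feasible for $\J$ and processes the disjoint union $B\cup C$; by optimality of $\OPT(\J)$ we get $W(A)+W(C)=W(\OPT(\J))\ge W(B)+W(C)$, hence $W(A)\ge W(B)$. So the entire proof reduces to the disjointness claim $B\cap C=\emptyset$.

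To establish $B\cap C=\emptyset$ I would combine two ingredients. The first is a greedy/matroid fact: the feasibly schedulable job sets form a transversal matroid (jobs matched to time slots within their windows), so $\OPT$ is computed by the weight-greedy rule; since the extra jobs in $\J\setminus\J_{\le t}$ are released after $t$ and hence only ever compete for slots in $(t,T]$, every job of $\J_{\le t}$ selected by $\OPT(\J)$ is also selected by $\OPT(\J_{\le t})$, i.e. $S\cap\J_{\le t}\subseteq S'$. The second is the canonical (earliest-deadline-first, work-conserving) structure. Suppose $j\in B\cap C$; then $\OPT(\J)$ processes $j$ at some $\tau>t$, so $d_j\ge t+2$ and $j$ is available throughout $[r_j,t]$. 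Work-conservation forces every slot of $[r_j,t]$ in $\OPT(\J)$ to be busy, and the earliest-deadline rule forces each job $p_s$ placed there while $j$ waits to satisfy $d_{p_s}\le d_j$. Each such $p_s$ lies in $S\cap\J_{\le t}\subseteq S'$, so $\OPT(\J_{\le t})$ processes all of them; since they all have deadline at most $d_j$ and are released by their slots, the earliest-deadline rule in $\OPT(\J_{\le t})$ schedules them ahead of $j\in B\subseteq S'$, pushing $j$ to a slot no earlier than in $\OPT(\J)$, i.e. past $t$, contradicting $j\in B$.

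I expect the disjointness claim $B\cap C=\emptyset$ to be the main obstacle, and within it the delicate point is the EDF \emph{deferral monotonicity}: arguing rigorously that a job the richer instance $\J$ defers past time $t$ is also deferred past $t$ by $\J_{\le t}$. Making this airtight requires care with the canonical tie-breaking (dominance, not merely deadlines) and with the fact that the jobs competing with $j$ differ across the two instances (future jobs of deadline $\le d_j$ on one side, $\OPT(\J_{\le t})$'s own deferred jobs on the other); I would formalize it either by a Hall-type counting argument on the slots in $[r_j,d_j-1]$ available to the deadline-$\le d_j$ jobs, or by a direct EDF exchange. By contrast, the matroid containment $S\cap\J_{\le t}\subseteq S'$ and the final exchange are routine once the disjointness is in hand.
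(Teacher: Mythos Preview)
Your plan is sound and the disjointness claim $B\cap C=\emptyset$ is indeed true, but the route is genuinely different from the paper's. The paper never isolates $B\cap C$; it argues by contradiction with a \emph{single job-pair} swap: assuming $W(A)<W(B)$ it picks $j\in B$, $j'\in A$ with $w_j>w_{j'}$ and splits into two cases according to whether $j\in\OPT(\J)^{(>t)}$, in each case exhibiting a strictly better schedule for either $\J$ or $\J_{\le t}$. So the paper's proof is a short, purely elementary exchange that uses the optimality of \emph{both} schedules directly, without any matroid or EDF machinery (though its swap-feasibility checks are left rather implicit). Your approach front-loads the work: the matroid containment $S\cap\J_{\le t}\subseteq S'$ and the EDF deferral-monotonicity together give the structural lemma $B\cap C=\emptyset$, after which the endgame is a one-line wholesale swap. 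This buys you a cleaner final step and a reusable intermediate fact, at the price of the monotonicity argument you correctly flag as delicate. For that step, note that on $[0,t]$ the canonical schedule of $S$ coincides with the canonical (EDF) schedule of $S\cap\J_{\le t}$, since no late job is yet released; combined with $S\cap\J_{\le t}\subseteq S'$, the claim reduces to the standard fact that enlarging the job set in EDF can only postpone each surviving job, which is an easy induction on time.
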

\begin{proof}
The core concept of the proof is that the optimal schedule prioritizes processing jobs with later deadlines only if there are other jobs with significant weights arriving in the future with the same deadline. Assume that $W(\OPT(\J)^{(\le t)}) < W(\OPT(\J_{\le t})^{(\le t)})$. This implies the existence of a pair of distinct jobs $j$ and $j'$, where $w_j > w_{j'}$, such that $j \in \OPT(\J_{\le t})^{(\le t)}$ and $j' \in \OPT(\J)^{(\le t)}$.

First, consider the case where $j \notin \OPT(\J)^{(>t)}$. Replacing $j'$ with $j$ would increase $W(\OPT(\J)^{(\le t)})$ without affecting $W(\OPT(\J)^{(>t)})$, contradicting the optimality of $\OPT(\J)$.

Next, consider the case where $j \in \OPT(\J)^{(>t)}$. In this case, there must exist a job $j'' \in \OPT(\J_{\le t})^{(>t)}$ such that $j'' \notin \OPT(\J)^{(>t)}$. Otherwise, scheduling $j$ in $\OPT(\J_{\le t})^{(>t)}$ and scheduling $j'$ in $\OPT(\J_{\le t})^{(\le t)}$ would increase $W(\OPT(\J_{\le t})^{(\le t)})$, contradicting the optimality of $\OPT(\J_{\le t})$.

Consider the relationship between $j'$ and $j''$. According to our assumption, each job has a unique weight, where $w_{j'} \neq w_{j''}$. In this case, replacing $j$ and $j''$ with $j'$ and $j$ (or vice versa) would increase the total weight of either $\OPT(\J)$ or $\OPT(\J_{\le t})$, contradicting the optimality of either $\OPT(\J)$ or $\OPT(\J_{\le t})$, and we thus complete the proof.

\end{proof}

\begin{lemma}\label{lem:cons}
\laps \ is 1-consistent.
\end{lemma}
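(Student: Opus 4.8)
The plan is to prove $1$-consistency directly: when the prediction is perfect, i.e. $\J = \predJ$ (equivalently $\eta = 1$), I would show that the local test on Lines 4--5 of \laps\ succeeds at \emph{every} time step, so the algorithm never executes Line 7 and instead reproduces $\OPT(\J)$ exactly, yielding competitive ratio $1$. The first step is to note that when $\J = \predJ$ the choices $\hat{\S}$ computed in Line 1 give $\hat{\S}(\J) = \hat{\S}(\predJ) = \OPT(\predJ) = \OPT(\J)$, so that $\hat{\S}(\J)^{(t)} = \OPT(\J)^{(t)}$ for all $t$; the predicted schedule literally \emph{is} the optimal schedule for the realized instance.

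The main argument is an induction on $t$ with the invariant that, at the start of iteration $t$, the processed set satisfies $\P = \OPT(\J)^{(\le t-1)}$. Under this hypothesis I would verify the two conditions guarding Line 5. For the first condition, $\hat{\S}(\J)^{(t)} = \OPT(\J)^{(t)}$ is the job the optimal canonical schedule runs at time $t$, which is distinct from every job run at an earlier time; hence $\hat{\S}(\J)^{(t)} \notin \OPT(\J)^{(\le t-1)} = \P$. For the second condition, the key identity is
\[
\P \cup \hat{\S}(\J)^{(t)} = \OPT(\J)^{(\le t-1)} \cup \OPT(\J)^{(t)} = \OPT(\J)^{(\le t)},
\]
so the denominator of the test ratio equals $W(\OPT(\J)^{(\le t)})$. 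Invoking \cref{lem:localtest}, $W(\OPT(\J_{\le t})^{(\le t)}) \le W(\OPT(\J)^{(\le t)})$, whence the ratio is at most $1 \le \rho$. Both conditions hold, Line 5 executes, and $\P$ is updated to $\OPT(\J)^{(\le t)}$, restoring the invariant for $t+1$. At $t = T$ this gives $\P = \OPT(\J)$, so \laps\ returns an optimal schedule.

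I expect the only delicate points to be bookkeeping rather than conceptual. One is the treatment of idle steps, where $\OPT(\J)^{(t)}$ is a zero-weight dummy: I would argue that adding such a job leaves both $\P$ and its total weight unchanged and never triggers a switch, so the induction carries through verbatim, and I would also handle small $t$ (before any positive-weight job is released) by noting the ratio is $1$ there. The genuine content of the proof is supplied entirely by \cref{lem:localtest}: it is precisely the statement that feeding the optimal prefix $\OPT(\J)^{(\le t)}$ into the test keeps the ratio at most $1$, which is exactly what guarantees the test never fails on a correct prediction. Everything else is the routine verification that the inductive invariant $\P = \OPT(\J)^{(\le t-1)}$ is preserved.
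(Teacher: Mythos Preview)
Your proposal is correct and follows essentially the same approach as the paper: both arguments reduce to the observation that when $\J = \predJ$ one has $\hat{\S}(\J) = \OPT(\J)$, and then invoke \cref{lem:localtest} to conclude that the ratio in Line~4 is at most $1 \le \rho$ at every step, so the algorithm never switches. Your version is more careful---you make the induction on $t$ explicit, verify the first clause $\hat{\S}(\J)^{(t)} \notin \P$ separately, and discuss the dummy-job edge case---whereas the paper's proof compresses all of this into two displayed inequalities.
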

\begin{proof}
To prove 1-consistency, one needs to ensure that \laps\ satisfies the condition in Line 4 for all $t\in [T]$. To achieve this, we leverage the insights provided by~\cref{lem:localtest}.
When $\J = \predJ$, we have $\hat{\S}(\predJ) = \OPT(\J)$, and the condition of Line 4 can be represented as: 
$$ \frac{W(\OPT(\J_{\le t})^{(\le t)})}{W(\OPT(\J)^{(\le t)})} \le \rho \ .$$ Since $\rho \ge 1$, we have $$\frac{W(\OPT(\J_{\le t})^{(\le t)})}{W(\OPT(\J)^{(\le t)})} \le 1 \le \rho \ , $$ which satisfies the condition of Line 4. Therefore, the algorithm will follow $\hat{\S}$ from the beginning to the end, thus completing the proof.
\end{proof}





\begin{lemma}\label{lem:smooth}
    \laps \ is $\eta$-competitive if $\eta \le \rho$. 
\end{lemma}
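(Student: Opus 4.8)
The plan is to show that the hypothesis $\eta \le \rho$ forces \laps\ to take the ``if'' branch (Lines 4--5) at every time step, so that it coincides with the naive predictor of Algorithm~\ref{alg:pred} and inherits its $\eta$-competitiveness. Concretely, I would prove by induction on $t$ the invariant that at the start of iteration $t$ the set of already-processed jobs satisfies $\P = \hat{\S}(\J)^{(\le t-1)}$; that is, \laps\ has followed the predicted schedule on every step $0,\dots,t-1$. Observe that, unlike the consistency argument of \cref{lem:cons}, this does not require \cref{lem:localtest}: here the error $\eta$ is measured directly against $\hat{\S}(\J)^{(\le t)}$, so the definition of $\eta$ does all the work.

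For the inductive step, assume $\P = \hat{\S}(\J)^{(\le t-1)}$. First I would verify the membership condition $\hat{\S}(\J)^{(t)} \notin \P$: since $\hat{\S}(\J)$ is a valid schedule, every positive-weight job occupies at most one slot, and each idle slot is filled by its own distinct zero-weight dummy, so $\hat{\S}(\J)^{(t)}$ is not among $\hat{\S}(\J)^{(0)},\dots,\hat{\S}(\J)^{(t-1)}$ and the test passes. Next, using the invariant, $\P \cup \hat{\S}(\J)^{(t)} = \hat{\S}(\J)^{(\le t)}$, so the local-test ratio in Line 4 simplifies and is bounded as
\[
\frac{W(\OPT(\J_{\le t})^{(\le t)})}{W(\P \cup \hat{\S}(\J)^{(t)})} = \frac{W(\OPT(\J_{\le t})^{(\le t)})}{W(\hat{\S}(\J)^{(\le t)})} \le \max_{t' \in [T]} \frac{W(\OPT(\J_{\le t'})^{(\le t')})}{W(\hat{\S}(\J)^{(\le t')})} = \eta \le \rho,
\]
where the middle inequality is the definition of $\eta$ and the last is the hypothesis. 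Hence the condition of Line 4 holds, \laps\ executes Line 5, and the invariant is restored with $\P = \hat{\S}(\J)^{(\le t)}$.

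Running the induction to $t = T$ yields $\P = \hat{\S}(\J)$ at termination, i.e.\ \laps\ returns exactly the schedule produced by Algorithm~\ref{alg:pred}. The competitive bound then follows immediately from the warm-up lemma: $\frac{W(\OPT(\J))}{W(\P)} = \frac{W(\OPT(\J))}{W(\hat{\S}(\J))} \le \eta$, so \laps\ is $\eta$-competitive whenever $\eta \le \rho$.

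I expect the only delicate point to be the well-definedness of the ratio together with the membership condition $\hat{\S}(\J)^{(t)} \notin \P$. The membership claim rests on reading each idle step as a fresh dummy job, so that following the prediction never attempts to reprocess a job; this must be stated carefully. For well-definedness, the degenerate case $W(\hat{\S}(\J)^{(\le t)}) = 0$ must be dispatched: it forces $W(\OPT(\J_{\le t})^{(\le t)}) = 0$ as well, since otherwise that time step would make $\eta$ infinite, contradicting $\eta \le \rho$; the test is then vacuously satisfied by the $0/0$ convention. Everything else is a direct unwinding of the definition of $\eta$ and needs no new structural fact about the optimum.
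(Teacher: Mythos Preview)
Your proposal is correct and follows essentially the same approach as the paper's proof. The paper's version simply asserts that under $\eta\le\rho$ ``\laps\ follows $\hat{\S}$ from the beginning to the end'' and then concludes $W(\P)=W(\hat{\S}(\J))$; you supply the induction that justifies this assertion (and carefully handle the membership test and the $0/0$ degeneracy, which the paper leaves implicit), so your write-up is strictly more detailed but not a different argument.
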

    
\begin{proof}
Initially, by applying the definition of $\eta$, we have 
\begin{align}\label{equ:1}
    \frac{W(\OPT(\J))}{W(\hat{\S}(\J)) }
&\le \max_{t\in [T]} \frac{W(\OPT(\J_{\le t})^{(\le t)})}{W(\hat{\S}(\J)^{(\le t)})}  = \eta \ .
\end{align}

Next, by our assumption $\eta \le \rho$, we are in the case where \laps \ follows $\hat{\S}$ from the beginning to the end (Lines 1-5). Therefore, we have
\begin{align*}
\frac{W(\OPT(\J))}{W(\P)} & = \frac{W(\OPT(\J))}{W(\hat{\S}(\J))}  \le \eta\ ,
\end{align*}
where the second inequality comes from the result of (\ref{equ:1}).
\end{proof}


To prove the robustness of \laps, we partition the output schedule $\P$ into $\P^{(\le \tl -1)}$ and $\P^{(\ge \tl)}$ as processed jobs of \laps \ within the intervals $[0,\tl-1]$ and $[\tl, T]$, respectively, where $\tl - 1 = \argmax_{t: t \in [T]} \frac{W(\OPT(\J_{\le t})^{(\le t)})}{W(\P^{(\le t )})} \le \rho $ is the last time that the local test ratio is not greater than the threshold $\rho$. In this case, \laps \ follows \textsc{OnlineAlg} from $\tl$ to the end.





\begin{lemma}\label{lem:maximize}
$\OPT(\J)^{(\le \tl-1)} \cap \OPT(\J_{\le \tl-1})^{(> \tl-1)} \neq \emptyset$, if $W(\OPT(\J)^{(\le \tl-1)}) > W(\OPT(\J_{\le \tl-1})^{(\le \tl-1)})$.
\end{lemma}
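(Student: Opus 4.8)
The plan is to argue by contradiction via a two-block exchange against the optimality of $\OPT(\J_{\le \tl-1})$. Write $A = \OPT(\J)^{(\le \tl-1)}$ for the jobs processed in the first $\tl$ slots of the global optimum, and split the restricted optimum into $B = \OPT(\J_{\le \tl-1})^{(\le \tl-1)}$ and $C = \OPT(\J_{\le \tl-1})^{(> \tl-1)}$, so that $W(\OPT(\J_{\le \tl-1})) = W(B) + W(C)$. The hypothesis then reads $W(A) > W(B)$, and the goal $A \cap C \neq \emptyset$; so I would assume toward a contradiction that $A \cap C = \emptyset$.

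First I would record two containments that make the exchange legal. Every job in $A$ is processed by $\OPT(\J)$ at some time $\le \tl-1$, hence has release time $\le \tl-1$, giving $A \subseteq \J_{\le \tl-1}$; and trivially $C \subseteq \OPT(\J_{\le \tl-1}) \subseteq \J_{\le \tl-1}$. The key observation is that feasibility of placing an individual job in a given slot depends only on that job's own release time and deadline, not on the rest of the instance. Consequently, the slot assignment that $\OPT(\J)$ uses to place the jobs of $A$ into distinct slots of $[0,\tl-1]$ remains feasible for the instance $\J_{\le \tl-1}$.

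Next I would build a single schedule $\S'$ for $\J_{\le \tl-1}$ by placing the jobs of $A$ in $[0,\tl-1]$ exactly as $\OPT(\J)$ does, and the jobs of $C$ in $[\tl, T]$ exactly as $\OPT(\J_{\le \tl-1})$ does. This is well defined: the slot ranges $[0,\tl-1]$ and $[\tl, T]$ are disjoint, and the assumption $A \cap C = \emptyset$ guarantees no job is scheduled twice, so $\S'$ processes $A \cup C$ with total weight $W(A) + W(C)$. Applying $W(A) > W(B)$ yields $W(\S') = W(A) + W(C) > W(B) + W(C) = W(\OPT(\J_{\le \tl-1}))$, contradicting the optimality of $\OPT(\J_{\le \tl-1})$ on $\J_{\le \tl-1}$. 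Hence $A \cap C \neq \emptyset$, as claimed.

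The argument is essentially a block swap with no heavy computation; the one point needing care — and where a careless proof could slip — is the feasibility of $\S'$. I would make explicit that (i) $A \subseteq \J_{\le \tl-1}$, so the jobs of $A$ genuinely belong to the restricted instance, and (ii) feasibility is a per-job, instance-independent property, so reusing $\OPT(\J)$'s placement of $A$ is legitimate; these are precisely what let $W(A)+W(C)$ serve as the weight of a valid schedule for $\J_{\le \tl-1}$. I would also note that nothing in the argument uses the defining property of $\tl$ (that $\tl-1$ is the last time the local test passes): the statement holds with any fixed time in place of $\tl-1$, and it is the strict-inequality counterpart of \cref{lem:localtest}.
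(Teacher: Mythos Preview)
Your proof is correct and follows essentially the same approach as the paper: assume $A\cap C=\emptyset$, replace $B$ by $A$ on $[0,\tl-1]$ while keeping $C$ on $[\tl,T]$, and derive a contradiction to the optimality of $\OPT(\J_{\le \tl-1})$. Your version is in fact more careful than the paper's, as you make explicit why the swap is feasible ($A\subseteq \J_{\le \tl-1}$ and per-job feasibility), whereas the paper simply asserts the replacement.
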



\begin{proof}
We prove the lemma by contradiction. Assume $\OPT(\J)^{(\le \tl-1)} \cap \OPT(\J_{\le \tl-1})^{(> \tl-1)} = \emptyset$. By replacing $\OPT(\J_{\le \tl-1})^{(\le \tl-1)}$ with $\OPT(\J)^{(\le \tl-1)}$, we have 
\begin{align*}
    &W(\OPT(\J)^{(\le \tl-1)}) + W(\OPT(\J_{\le \tl-1})^{(> \tl-1)}) \\
    &> W(\OPT(\J_{\le \tl-1})^{(\le \tl-1)}) + W(\OPT(\J_{\le \tl-1})^{(> \tl-1)}) \\
    &= W(\OPT(\J_{\le \tl-1}))\ ,
\end{align*}
which contradicts the optimality of $\OPT(\J_{\le \tl-1})$.
\end{proof}


\begin{restatable}{theorem}{lemrobust}
\label{lem:robust}
\laps \ is $\rho + \gamma_{\texttt{on}} + 1$-robust.
\end{restatable}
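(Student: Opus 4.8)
The plan is to prove the competitive bound $W(\OPT(\J)) \le (\rho+\gamma_{\texttt{on}}+1)\,W(\P)$ directly, by charging the weight of every job in $\OPT(\J)$ either to $W(\P^{(\le\tl-1)})$ or to $\gamma_{\texttt{on}}\,W(\P^{(\ge\tl)})$. First I would fix the split time $\tl$ and record the two facts that drive everything. From the definition of $\tl-1$ as the last step whose local-test ratio is at most $\rho$, the test at time $\tl-1$ gives $W(\OPT(\J_{\le\tl-1})^{(\le\tl-1)}) \le \rho\,W(\P^{(\le\tl-1)})$; this is the only place $\rho$ enters. From time $\tl$ onward \laps\ runs \textsc{OnlineAlg} on the residual instance (the released, unexpired, still-unprocessed jobs), so its $\gamma_{\texttt{on}}$-competitiveness yields $W(\P^{(\ge\tl)}) \ge \gamma_{\texttt{on}}^{-1}\,W(Q)$ for \emph{any} set $Q$ of residual jobs that can be feasibly scheduled inside the window $[\tl,T]$.

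Next I would partition $\OPT(\J)$ into three groups: $P_1 = \OPT(\J)\cap\P^{(\le\tl-1)}$, the jobs \laps\ already processed early; $P_2 = \OPT(\J)^{(\ge\tl)}\setminus\P^{(\le\tl-1)}$, the jobs $\OPT$ runs in $[\tl,T]$ that \laps\ has not processed; and $P_3 = \OPT(\J)^{(\le\tl-1)}\setminus\P^{(\le\tl-1)}$, the jobs $\OPT$ runs in $[0,\tl-1]$ that \laps\ has not processed. The first two groups are easy. Clearly $W(P_1)\le W(\P^{(\le\tl-1)})$ since $P_1\subseteq\P^{(\le\tl-1)}$. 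For $P_2$, every such job is scheduled by $\OPT$ at a time $\ge\tl$, so $P_2$ is by construction a feasible schedule inside $[\tl,T]$ drawn from the residual instance; applying the \textsc{OnlineAlg} guarantee with $Q=P_2$ gives $W(P_2)\le\gamma_{\texttt{on}}\,W(\P^{(\ge\tl)})$. This disposes of the entire contribution $\OPT(\J)^{(\ge\tl)}$.

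It remains to bound $W(P_3)$, and this is where Lemma~\ref{lem:localtest} and Lemma~\ref{lem:maximize} are needed. Since the jobs of $\OPT(\J)^{(\le\tl-1)}$ are all released by $\tl-1$ and scheduled inside $[0,\tl-1]$, this set is a feasible schedule over $\J_{\le\tl-1}$, hence $W(\OPT(\J)^{(\le\tl-1)})\le W(\OPT(\J_{\le\tl-1}))$. I would then split $\OPT(\J_{\le\tl-1})$ into its early part $\OPT(\J_{\le\tl-1})^{(\le\tl-1)}$, bounded by $\rho\,W(\P^{(\le\tl-1)})$ through the local test, and its late part $\OPT(\J_{\le\tl-1})^{(>\tl-1)}$, which is again a feasible schedule inside $[\tl,T]$. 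Lemma~\ref{lem:localtest} shows $W(\OPT(\J)^{(\le\tl-1)}) \ge W(\OPT(\J_{\le\tl-1})^{(\le\tl-1)})$ always, so the generic situation is exactly the front-loading one governed by Lemma~\ref{lem:maximize}: whenever $\OPT$ places more weight into $[0,\tl-1]$ than the restricted optimum does, the surplus consists of jobs the restricted optimum runs \emph{after} $\tl-1$, i.e.\ late-deadline jobs. The goal of this step is to conclude $W(P_3)\le\rho\,W(\P^{(\le\tl-1)})$.

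The hard part will be making the $P_3$ bound avoid double-charging the single \textsc{OnlineAlg} budget $\gamma_{\texttt{on}}\,W(\P^{(\ge\tl)})$, which I have already spent on $P_2$. The danger is precisely the late-deadline jobs (deadline $>\tl$) that $\OPT$ chooses to run inside $[0,\tl-1]$: individually each is feasible in $[\tl,T]$, but together with $P_2$ they may overflow the $T-\tl+1$ slots there, so they cannot all be re-charged to \textsc{OnlineAlg}. The resolution I would pursue is to show, via the feasibility/exchange structure of canonical schedules together with Lemma~\ref{lem:maximize}, that the late-deadline jobs $\OPT$ is forced to front-load correspond to the late slots used by $\OPT(\J_{\le\tl-1})^{(>\tl-1)}$, so that the non-fitting remainder is absorbed into the local-test term $\rho\,W(\P^{(\le\tl-1)})$ rather than needing a second online charge. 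Granting the $P_3$ bound, summing the three groups gives $W(\OPT(\J))\le(\rho+1)\,W(\P^{(\le\tl-1)})+\gamma_{\texttt{on}}\,W(\P^{(\ge\tl)})\le(\rho+\gamma_{\texttt{on}}+1)\,W(\P)$, which is the claimed robustness.
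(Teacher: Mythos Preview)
Your target inequality $W(\OPT(\J)) \le (\rho+1)\,W(\P^{(\le\tl-1)}) + \gamma_{\texttt{on}}\,W(\P^{(\ge\tl)})$ is false in general, so the proposed $P_3$ bound $W(P_3)\le\rho\,W(\P^{(\le\tl-1)})$ cannot be rescued by any exchange argument. Take $\rho=1$, $\tl-1=0$, $T=1$, and jobs $a=(0,2,10)$, $b=(0,1,1)$, $c=(1,2,10)$ with a prediction that makes \laps\ process $b$ at time $0$. The local test passes at $t=0$ (ratio $1/1$) and fails at $t=1$ (ratio $20/11$), so $\tl=1$; \textsc{OnlineAlg} then processes one of $a,c$ at time $1$. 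Here $W(\OPT(\J))=20$, $W(\P^{(\le 0)})=1$, $W(\P^{(\ge 1)})=10$, and your right-hand side is $2+10\gamma_{\texttt{on}}<20$ for any $\gamma_{\texttt{on}}<1.8$. The failure is exactly the double-charging you flagged: $P_2=\{c\}$ and the late-deadline part of $P_3$ (namely $\{a\}$) both need the single online slot, and neither can be absorbed into the local-test term because $W(\OPT(\J_{\le 0})^{(\le 0)})=1$ is tiny.

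The paper sidesteps this by \emph{not} trying to charge $\OPT(\J)^{(\le\tl-1)}$ and $\OPT(\J)^{(>\tl-1)}$ to two separate budgets simultaneously. It introduces a balance parameter $c$ and splits into two cases. When $W(\OPT(\J)^{(>\tl-1)})>c\,W(\OPT(\J)^{(\le\tl-1)})$, only the late part of $\OPT$ is charged to \textsc{OnlineAlg}, giving ratio $\tfrac{c+1}{c}\gamma_{\texttt{on}}$. When the inequality goes the other way, the paper bounds only $W(\OPT(\J)^{(\le\tl-1)})$ by $(\rho+1)W(\P)$, using the online budget solely for $\Tilde{\P}_2=\bigl(\OPT(\J)^{(\le\tl-1)}\cap\OPT(\J_{\le\tl-1})^{(>\tl-1)}\bigr)\setminus\P^{(\le\tl-1)}$ (these jobs are feasible after $\tl$ and there is no competition from $\OPT(\J)^{(>\tl-1)}$, which is handled by the factor $c+1$). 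Optimizing $c=\gamma_{\texttt{on}}/(\rho+1)$ gives $\rho+\gamma_{\texttt{on}}+1$. So the missing idea in your plan is precisely this parametrized case split; without it, the online budget is genuinely overcommitted.
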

\begin{proof}[Proof sketch; full proof in Appendix]
We approach the proof by considering two different cases of the optimal solution, each addressing different scenarios of the relationship between the processed weights after time $\tl - 1$ and the processed weights on or before time $\tl-1$, where $c > 0$ is a parameter that we choose later:

\textbf{Case 1:} $W(\OPT(\J)^{(> \tl-1)}) > c W(\OPT(\J)^{(\le \tl-1)})$
    
    In this case, the processed weights after time $\tl-1$ significantly exceed the processed weights on or before time $\tl-1$. Since \laps \ follows \textsc{OnlineAlg} after time $\tl$, we can utilize the competitiveness of \textsc{OnlineAlg} to show that the competitive ratio of this case is $\frac{c+1}{c}\gamma_{\text{on}}$.

\textbf{Case 2:} $W(\OPT(\J)^{(> \tl-1)}) \le c W(\OPT(\J)^{(\le \tl-1)})$

    We analyze this case by looking at the following two sub-cases:

    \begin{enumerate}
        \item[(a)] $W(\OPT(\J)^{(\le \tl-1)}) = W(\OPT(\J_{\le \tl-1})^{(\le \tl-1)}).$

        For this case, since the processed weights of \laps \ in the interval of $[0, \tl-1]$ is at least $\frac{1}{\rho} W(\OPT(\J_{\le \tl-1})^{(\le \tl-1)})$, we can obtain the competitive ratio of this case as $(c+1)\rho$. 

        \item[(b)] $W(\OPT(\J)^{(\le \tl-1)}) > W(\OPT(\J_{\le \tl-1})^{(\le \tl-1)})$

        In this sub-case, there is an overlapping scenario. By utilizing Lemma~\ref{lem:maximize}, we can handle the overlapping part. Finally, we can show that the competitive ratio is $(c+1)(\rho + 1)$.
    \end{enumerate}

By optimizing $c = \frac{\gamma_{\text{on}}}{\rho+1}$, we conclude that the overall competitive ratio is $\gamma_{\text{on}} + \rho + 1$.
\end{proof}

We are ready to state the main result of this section, which is our upper bound on the competitive ratio of \laps.

\begin{theorem}\label{thm:mainthm} 
\laps{} with a $\gamma_{\texttt{on}}$-competitive
algorithm \textsc{OnlineAlg} and an optimal solution for the prediction $\predJ$ achieves a competitive ratio of
\[
\begin{cases*}
    \eta & \text{if $\eta \le \rho$} \\
    \rho + \gamma_{\texttt{on}} + 1 & \text{if $\eta > \rho$} \ ,
\end{cases*}
\]
where 
$\rho \ge 1$ is an error tolerance parameter chosen by \laps.
\end{theorem}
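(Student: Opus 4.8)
The plan is to obtain the theorem as a direct synthesis of the lemmas already established, splitting on whether the error $\eta$ falls below or above the tolerance parameter $\rho$. The two branches of the stated competitive ratio correspond exactly to the smoothness bound (Lemma~\ref{lem:smooth}) and the robustness bound (Theorem~\ref{lem:robust}), so the main work is simply to verify that these two regimes partition all possibilities and that each lemma's hypothesis is met in its respective regime. No new quantitative estimate is needed beyond what those lemmas supply.

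First I would handle the case $\eta \le \rho$. Here Lemma~\ref{lem:smooth} applies verbatim: its hypothesis is precisely $\eta \le \rho$, and its conclusion gives $\frac{W(\OPT(\J))}{W(\P)} \le \eta$, which is the first branch. As a consistency check, note that when the prediction is perfect we have $\eta = 1 \le \rho$ (since $\rho \ge 1$), so this branch recovers the $1$-consistency of Lemma~\ref{lem:cons}, confirming that the smooth regime degrades gracefully from the optimal value at $\eta = 1$.

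Next I would handle the complementary case $\eta > \rho$. Here the smoothness guarantee no longer applies, and instead I would invoke Theorem~\ref{lem:robust}, which asserts that \laps{} is $(\rho + \gamma_{\texttt{on}} + 1)$-robust for every error value, in particular for $\eta > \rho$. This yields the second branch directly. Since the cases $\eta \le \rho$ and $\eta > \rho$ are exhaustive and mutually exclusive, combining them establishes the claimed piecewise bound.

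The main obstacle is not in this assembly but in the robustness bound it relies upon: Theorem~\ref{lem:robust} requires the delicate case analysis on how $W(\OPT(\J))$ splits across the intervals $[0,\tl-1]$ and $[\tl,T]$, together with the optimization over the parameter $c$, and that argument is carried out separately. Given that lemma, the present theorem follows with no further work; the only point I would verify carefully is that the threshold $\rho$ used to define $\tl$ in the robustness argument coincides with the error-tolerance $\rho$ appearing in the statement, so that a single consistent threshold governs both regimes and the two branches meet correctly at $\eta = \rho$.
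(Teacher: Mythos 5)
Your proposal is correct and matches the paper's (implicit) argument exactly: the theorem is stated as an immediate combination of Lemma~\ref{lem:smooth} for the regime $\eta \le \rho$ and Theorem~\ref{lem:robust} for the regime $\eta > \rho$, with no additional work. Your observation that the two regimes are exhaustive and that the same threshold $\rho$ governs both the local test and the error tolerance is the only verification needed, and you have it right.
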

\section{Lower bound}
\label{sec:lower-bound}
This section shows some impossibility results about learning augmented algorithms in the context of \emph{Online Packet Scheduling with Deadlines}.
The proofs in this section can be found in \cref{app:lb}.

We first show that there exists an instance for which any $1$-consistent algorithm cannot be $2$-robust. 
\begin{restatable}{rPro}{proplb}
\label{prop:1-consist}
Assume a deterministic learning-augmented algorithm is 1-consistent; then, the competitive ratio cannot be better than 2.
\end{restatable}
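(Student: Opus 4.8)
The plan is to exhibit an explicit adversarial construction with two possible realizations that are indistinguishable under a single prediction, forcing any deterministic $1$-consistent algorithm into a bad choice on at least one of them. Since $1$-consistency means the algorithm must be \emph{exactly} optimal whenever $\J = \predJ$, the prediction pins down the algorithm's behavior on the predicted instance completely; the trick is to design a realization that diverges from the prediction only after the algorithm has been forced to commit to the consistency-optimal action, and where that committed action is catastrophic for the alternate realization.

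Concretely, I would use a short instance over just a couple of time steps. First I would fix a prediction $\predJ$ whose unique optimal schedule processes a specific low-weight job at time $0$ (because, say, a high-weight job with a later deadline is predicted to be processable later, while the low-weight job expires sooner). A $1$-consistent algorithm run on $\J = \predJ$ must reproduce this optimal schedule exactly, so in particular it must process that low-weight job at time $0$. Because the algorithm is deterministic and online, its action at time $0$ depends only on the jobs released by time $0$ together with the (fixed) prediction; so I would arrange the two realizations to agree on all jobs released at time $0$ and on the prediction, guaranteeing the algorithm takes the \emph{same} time-$0$ action on both. The first realization is $\J = \predJ$ itself, where consistency forces the low-weight choice; the second realization $\J'$ agrees at time $0$ but then reveals that the anticipated high-weight job never arrives (or arrives with a deadline making it infeasible), so that on $\J'$ the genuinely optimal move was to grab the high-weight job at time $0$.

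The key quantitative step is the ratio computation on $\J'$: the algorithm, locked into the time-$0$ action dictated by consistency on $\predJ$, collects at most the low weight (plus whatever is still feasible afterward), whereas $\OPT(\J')$ collects the high weight. By tuning the weights so that the high weight is just under twice the total weight the forced algorithm can salvage, the competitive ratio on $\J'$ approaches $2$, and with the distinct-weights perturbation one gets ratio strictly worse than any bound below $2$; I would then argue the bound cannot be improved to $2$ by showing the supremum of achievable ratios over such constructions is exactly $2$, i.e.\ no $1$-consistent algorithm beats $2$. The main obstacle I anticipate is the indistinguishability bookkeeping: I must be careful that the two realizations truly induce identical information at every decision point up to the moment of the forced commitment, so that determinism genuinely pins the action, and simultaneously that the error measure $\eta$ and the consistency requirement are compatible on $\predJ$ (the predicted instance really is its own optimum). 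Getting the deadline structure to make the low-weight job expire exactly when the commitment is forced, while keeping the high-weight job's feasibility window realization-dependent, is the delicate part of the construction.
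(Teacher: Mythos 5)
Your high-level strategy matches the paper's: fix a prediction, use $1$-consistency to force the algorithm to process the low-weight, early-deadline job at time $0$, and then have the adversary diverge from the prediction after that commitment. However, the specific divergence mechanism you propose does not work, and the ``delicate part'' you flag at the end is in fact impossible in this model. You want the second realization $\J'$ to agree with the prediction on all jobs released at time $0$, yet have the high-weight job ``never arrive'' or have a ``realization-dependent feasibility window.'' A job's release time, deadline, and weight are all revealed at its release; if the high-weight job is released at time $0$ (which it must be for ``grabbing it at time $0$'' to be the optimal move on $\J'$), then agreement at time $0$ forces it to be present in $\J'$ with the same deadline, so it cannot retroactively disappear or become infeasible. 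Conversely, if it is released later, the algorithm could not have taken it at time $0$ anyway. With only one high-weight job that remains feasible, the forced algorithm simply processes it at time $1$ and collects $\epsilon + H$, which is optimal --- your construction yields no lower bound at all. Your quantitative step (``tune the high weight to be just under twice what the algorithm salvages'') has no construction backing it up.

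The paper's fix is to \emph{add} a competing job rather than remove or invalidate one: take $\J_1 = \{(0,1,\epsilon),(0,2,1)\}$ as the prediction, so consistency forces the $\epsilon$-job at time $0$, and let $\J_2 = \J_1 \cup \{(1,2,1)\}$. At time $1$ there are two weight-$1$ jobs with deadline $2$ but only one remaining slot, so the algorithm collects $1+\epsilon$ while $\OPT(\J_2)$ processes a weight-$1$ job in each of the two slots for total $2$, giving ratio $\tfrac{2}{1+\epsilon}\to 2$. The lower bound comes from two equal-weight jobs colliding on a single slot, not from a single high-weight job whose availability changes between realizations. You would need to replace your divergence step with this (or an equivalent) collision gadget for the argument to go through.
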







Subsequently, we show that there is a necessary trade-off between consistency and robustness for the deadline model by extending the previous instance.

\begin{restatable}{theorem}{thmlb}
\label{LB-deadlines}
Assume a deterministic learning-augmented algorithm is $1+\alpha$-consistent; then, the competitive ratio cannot be better than $\frac{2}{1+\alpha}$.    
\end{restatable}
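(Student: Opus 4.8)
The plan is to use the standard indistinguishable-prefix argument for learning-augmented lower bounds, extending the gadget behind Proposition~\ref{prop:1-consist}. I fix a single prediction $\predJ$ and consider an arbitrary deterministic $(1+\alpha)$-consistent algorithm $\A$. Because consistency constrains $\A$ only on the instance where the realization equals the prediction, I first feed $\A$ the truthful run (realization $=\predJ$) and use the guarantee $W(\OPT(\predJ))\le(1+\alpha)\,W(\A(\predJ))$ to pin down $\A$'s action at time $0$. I then run $\A$ on a second realization $\J$ that agrees with $\predJ$ on every job released at time $0$ but omits a job the prediction had promised for time $1$. Since $\A$ is deterministic and, at time $0$, sees exactly the same inputs in both runs (the prediction $\predJ$ together with the jobs released so far), it must make the identical choice at time $0$; the truthful run forces that choice to be the one that is harmful on $\J$. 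Note that for $\alpha\ge 1$ the claimed bound $\tfrac{2}{1+\alpha}$ is at most $1$ and the statement is vacuous, so it suffices to treat $\alpha\in[0,1)$.

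Concretely, I extend the Proposition~\ref{prop:1-consist} instance to a three-job gadget on the slots $t\in\{0,1\}$: an urgent job $j_1=(r{=}0,d{=}1,w{=}u)$, a flexible job $j_2=(r{=}0,d{=}2,w{=}1)$, and a predicted later job $j_3=(r{=}1,d{=}2,w{=}B)$, with weights calibrated to $\alpha$ by taking $u<\tfrac{1-\alpha}{1+\alpha}$ and $B\in\bigl((1+\alpha)u+\alpha,\;1\bigr)$. On $\predJ=\{j_1,j_2,j_3\}$ the unique optimum schedules $j_2$ at time $0$ and $j_3$ at time $1$, so $W(\OPT(\predJ))=1+B$; a short case check over the only three legal time-$0$ moves (process $j_1$, process $j_2$, or idle) shows that every move other than ``$j_2$ at time $0$'' leaves $\A$ with total weight at most $1+u$, hence ratio at least $\tfrac{1+B}{1+u}>1+\alpha$, violating consistency. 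Thus $\A$ must process $j_2$ at time $0$. On the divergent realization $\J=\{j_1,j_2\}$ (identical at time $0$, with $j_3$ never arriving) this same forced move discards $j_1$ forever, so $W(\A(\J))=1$ while $W(\OPT(\J))=1+u$ (schedule $j_1$ at time $0$ and $j_2$ at time $1$), giving ratio $1+u$. Letting $u\uparrow\tfrac{1-\alpha}{1+\alpha}$ drives this ratio to $\tfrac{2}{1+\alpha}$, and since a valid $B$ exists for every such $u$, for any $\varepsilon>0$ I obtain an instance with ratio exceeding $\tfrac{2}{1+\alpha}-\varepsilon$; hence no $(1+\alpha)$-consistent algorithm can have competitive ratio below $\tfrac{2}{1+\alpha}$.

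The main obstacle is the calibration tension, which is exactly what blocks a naive extension of Proposition~\ref{prop:1-consist}: forcing $\A$ to commit to $j_2$ requires the weight gap between $j_1$ and $j_2$ to be large enough that deviating costs more than the $(1+\alpha)$ consistency slack, yet a large gap shrinks the punishment $1+u$ on the divergent instance. I expect the delicate point to be verifying that the feasibility window $(1+\alpha)u+\alpha<B<1$ stays nonempty precisely up to $u=\tfrac{1-\alpha}{1+\alpha}$, so that the bound is approached in the limit while the argument never degenerates; this is where the factor $\tfrac{2}{1+\alpha}$ is pinned down. A secondary point I would state carefully is the indistinguishability claim itself --- that a deterministic algorithm's time-$0$ decision is a function of the prediction and the jobs released by time $0$, which coincide across the two realizations --- since the entire forcing argument rests on transferring the consistency-forced move from the truthful run to the divergent run.
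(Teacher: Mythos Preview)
Your proof is correct, but it takes a route that is essentially the mirror image of the paper's. The paper fixes the prediction to be the \emph{two-job} instance $\predJ=\{(0,1,\alpha+\epsilon),(0,2,1)\}$: here $\OPT(\predJ)=1+\alpha+\epsilon$, so a $(1+\alpha)$-consistent algorithm is forced to process the urgent light job at time $0$ (otherwise it collects only weight $1$). The adversary then \emph{adds} a third job $(1,2,1)$ to form the realization; the forced time-$0$ move leaves the algorithm with $1+\alpha+\epsilon$ against an optimum of $2$, giving ratio $\tfrac{2}{1+\alpha+\epsilon}\to\tfrac{2}{1+\alpha}$.

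You do the opposite: your prediction is the \emph{three-job} instance, consistency forces the algorithm to take the flexible heavy job at time $0$, and the adversary \emph{removes} the promised late job to punish that choice. Both arguments rest on the same indistinguishable-prefix idea and reach the same bound, but the paper's construction is leaner---a single parameter $\epsilon$ rather than the coupled pair $(u,B)$ with the feasibility window $(1+\alpha)u+\alpha<B<1$ that you have to track. What your version buys is an explicit demonstration that the adversary can hurt the algorithm by \emph{withholding} predicted jobs rather than by \emph{injecting} unpredicted ones, which is a complementary observation; the paper's version buys simplicity and avoids the auxiliary parameter $B$ entirely.
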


\section{Experiment results}
\label{sec:exp}
We empirically evaluate the performance of the proposed algorithm \laps \ on both synthetic and real datasets against benchmarks without using predictions. 
We specifically consider the agreeable deadline model, where jobs with later arrival times cannot have earlier deadlines. 
%
The choice is motivated by an experimental study in a related paper \cite{sakr2016empirical}. Additionally, the model provides the same theoretical guarantees as the general deadline model. Furthermore, the current best algorithm for the general setting is complex, making the agreeable deadline model a pragmatic and efficient alternative for our analysis.

\subsection{Experiment settings}
\noindent\textbf{Benchmarks.} \laps{} is Algorithm~\ref{alg-general} with the default setting $\rho = 0.1$. \textbf{\textsc{MG}} is the Modified Greedy Algorithm from \cite{jez2012online}, which achieves a tight 1.618-competitive ratio for the agreeable deadline setting. \textbf{\textsc{Greedy}} is the Greedy Algorithm from \cite{kesselman2004buffer}, which always processes a packet with the largest weight in the buffer in each step and has been proven to be 2-competitive. \textbf{\textsc{EDF}} is the heuristic earliest-deadline-first algorithm, where the algorithm processes the packet with the earliest deadline in the buffer of each step. Despite the absence of theoretical guarantees on this problem, it performs well in many scheduling scenarios. Finally, we consider \textbf{\textsc{EDF}$_{\mathrm{\alpha}}$} from \cite{chin2006online}. For each time step, the algorithm first identifies the maximum weight $h$ of all packets in the buffer and processes a packet with a weight at least $\alpha \cdot h$ and the earliest deadline. It has been proven that the competitive ratio is no better than 2 in the agreeable deadline setting \cite{jez2012online}.


\noindent\textbf{Data sets.}
We consider two synthetic datasets and two real datasets. For the synthetic data, we begin by generating a set of real jobs denoted as $\J$. To create the predicted set of jobs $\predJ$, we introduce some error $err(j)$ for either weight or deadline, examining the effects on both weight and deadline. The predicted set of jobs is then defined as: (1) $\predJ= \{(j,r_j, d_j, w_j + err(j): j \in \predJ\}$, where $err(j)$ is sampled i.i.d. from $\N(0, \sigma)$ and (2) $\predJ= \{(j,r_j, d_j  + err(j), w_j: j \in \predJ\}$ where $err(j)$ is uniformly random sampled i.i.d. from the range $[-k,k]$, and $k$ is an integer.

For the first synthetic dataset, we generate predictions using a uniform distribution. Specifically, for each time $t\in {1,\ldots, T}$, where we set $T=75$, the number of job arrivals at time $t$ follows a uniform distribution with a range of [2,8]. For the second synthetic dataset, predictions are generated using a power-law distribution. More precisely, for each time $t\in {1,\ldots, T}$ (with $T=75$ fixed), the number of job arrivals at time $t$ is determined by $M(1-p(a))$, where $p(a)$ is sampled from a power-law distribution with parameter $a$, and $M$ is a scaling parameter. Throughout all experiments, we consistently use the values $a = 150$ and $M = 500$.
 
We also evaluate all the benchmarks on the College Message dataset \cite{panzarasa2009patterns} and the Brightkite dataset \cite{cho2011friendship} from the SNAP database. 
The college dataset spans 23 days, each with between 300 and 500 data points, while the Brightkite dataset covers 10 days within this range.
%
We note that the first dataset was previously used in the context of learning augmented
algorithms by \cite{balkanski2023energyefficient} and the second one was used in the \cite{DBLP:journals/jacm/LykourisV21} and \cite{bamas2020learning}.

We first fix the error parameter $\sigma>0$, then, for each day, we define the true set $\J$ as the arrivals for this day, and we create the predictions $\predJ$ by adding some error $err(i)$ to either weights or deadlines of each job $i$, where $err(i)$ is sampled i.i.d. from $\N(0, \sigma$).

\vspace{-2.5mm}
\subsection{Experiment results}

\begin{figure*}[t!]
 	\centering
  \setlength{\belowcaptionskip}{-8pt}
    \subfigure{\includegraphics[width=0.235\textwidth]{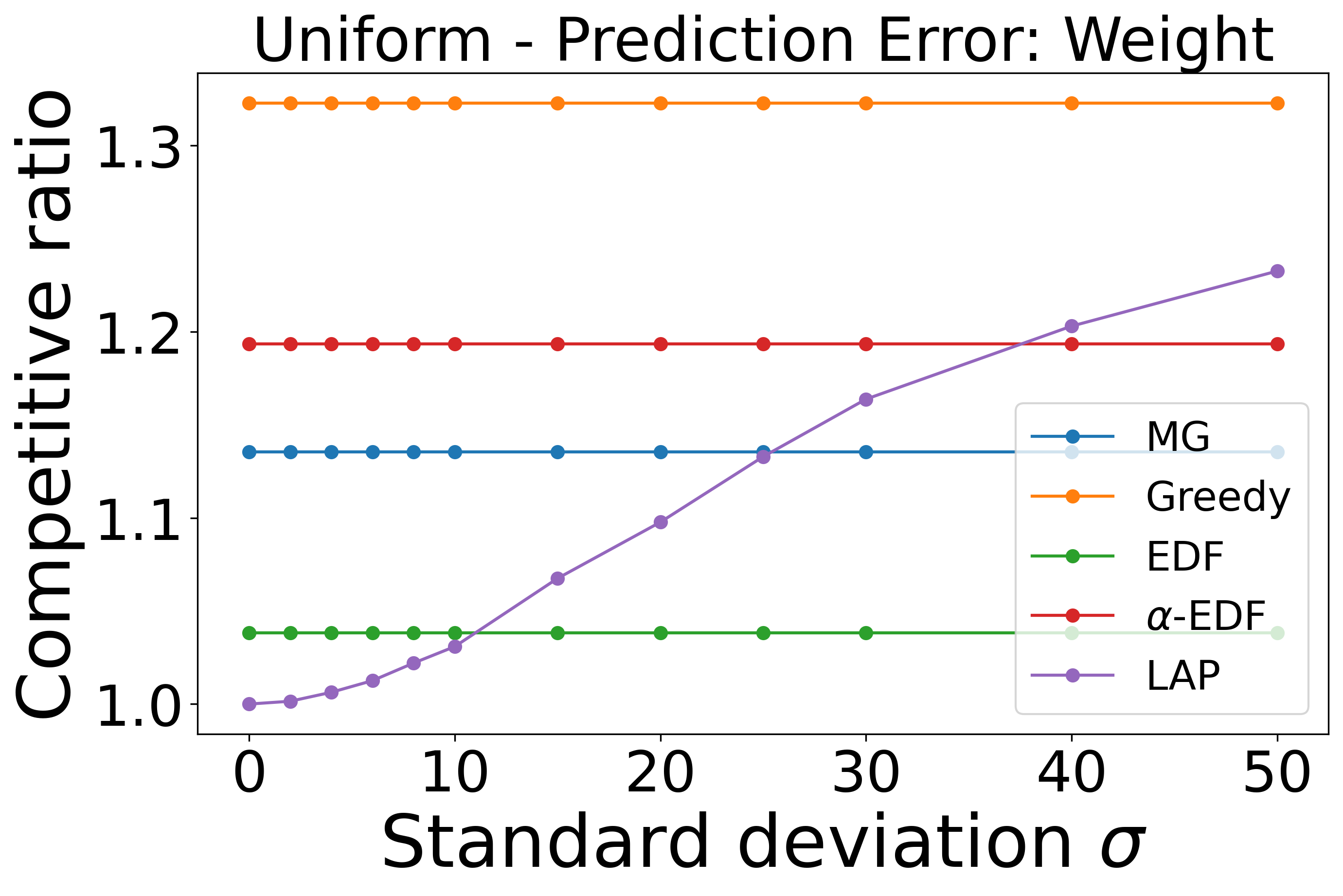}} 
    \subfigure{\includegraphics[width=0.24\textwidth]{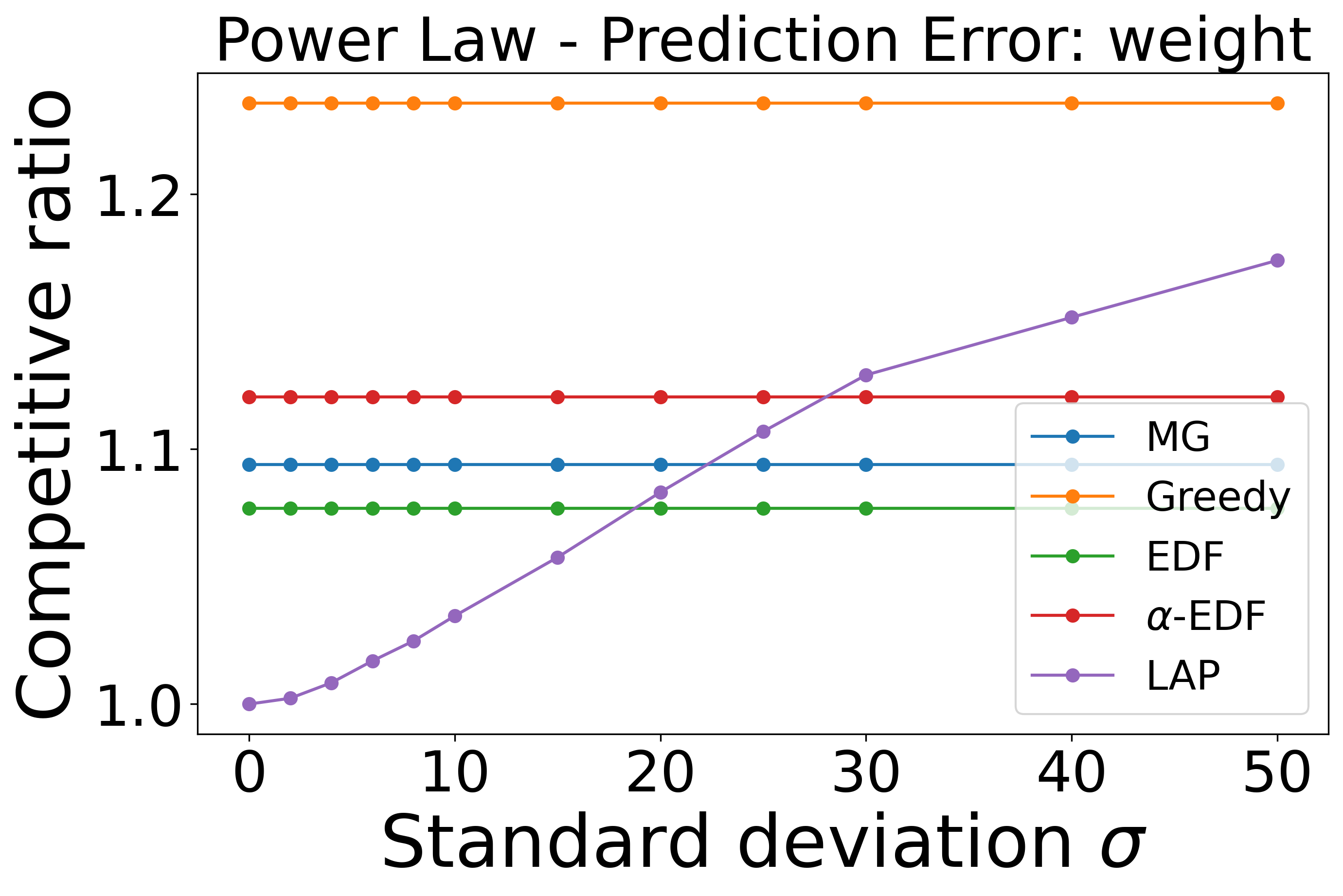}} 
    \subfigure{\includegraphics[width=0.24\textwidth]{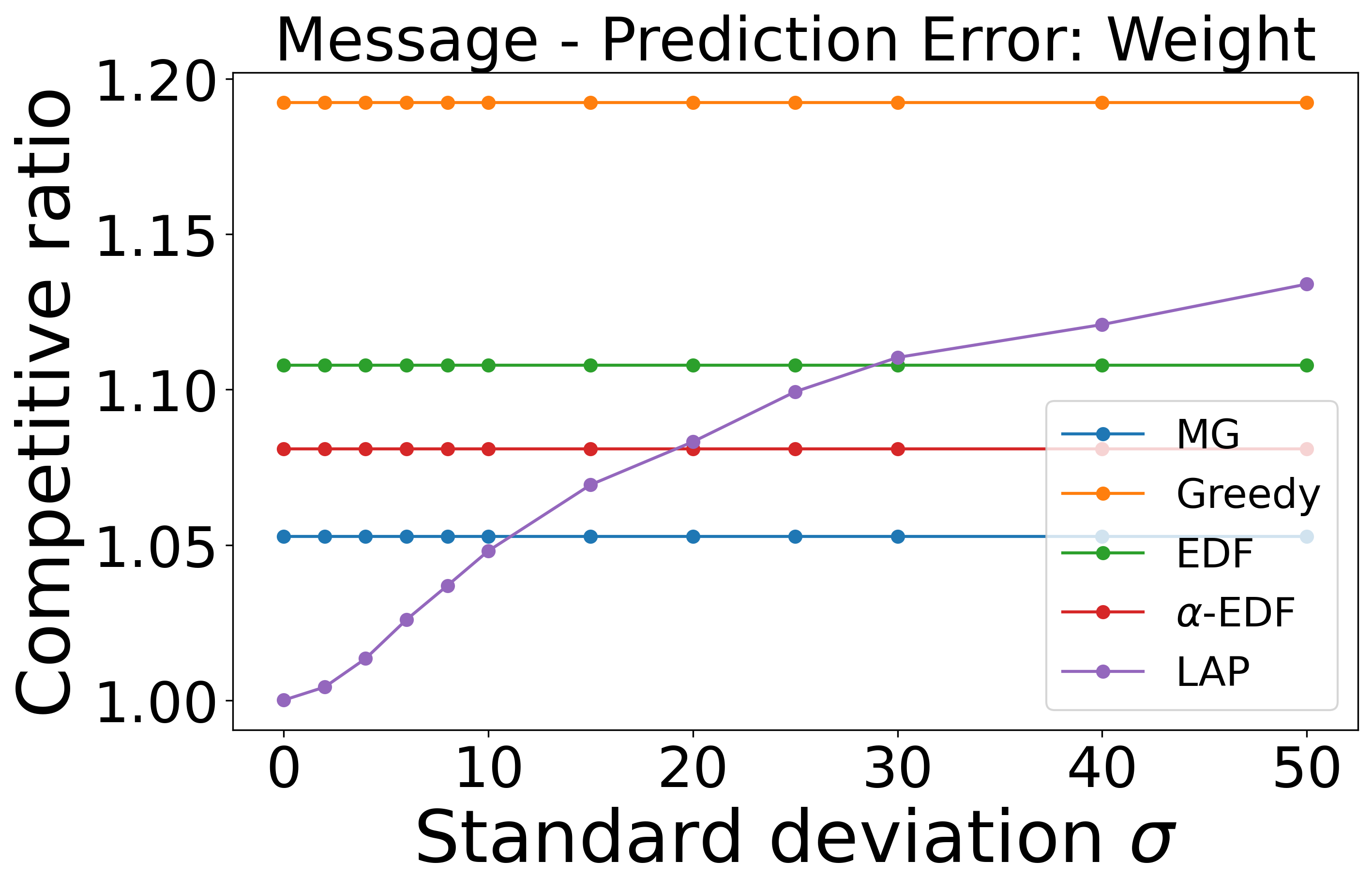}} 
    \subfigure{\includegraphics[width=0.24\textwidth]{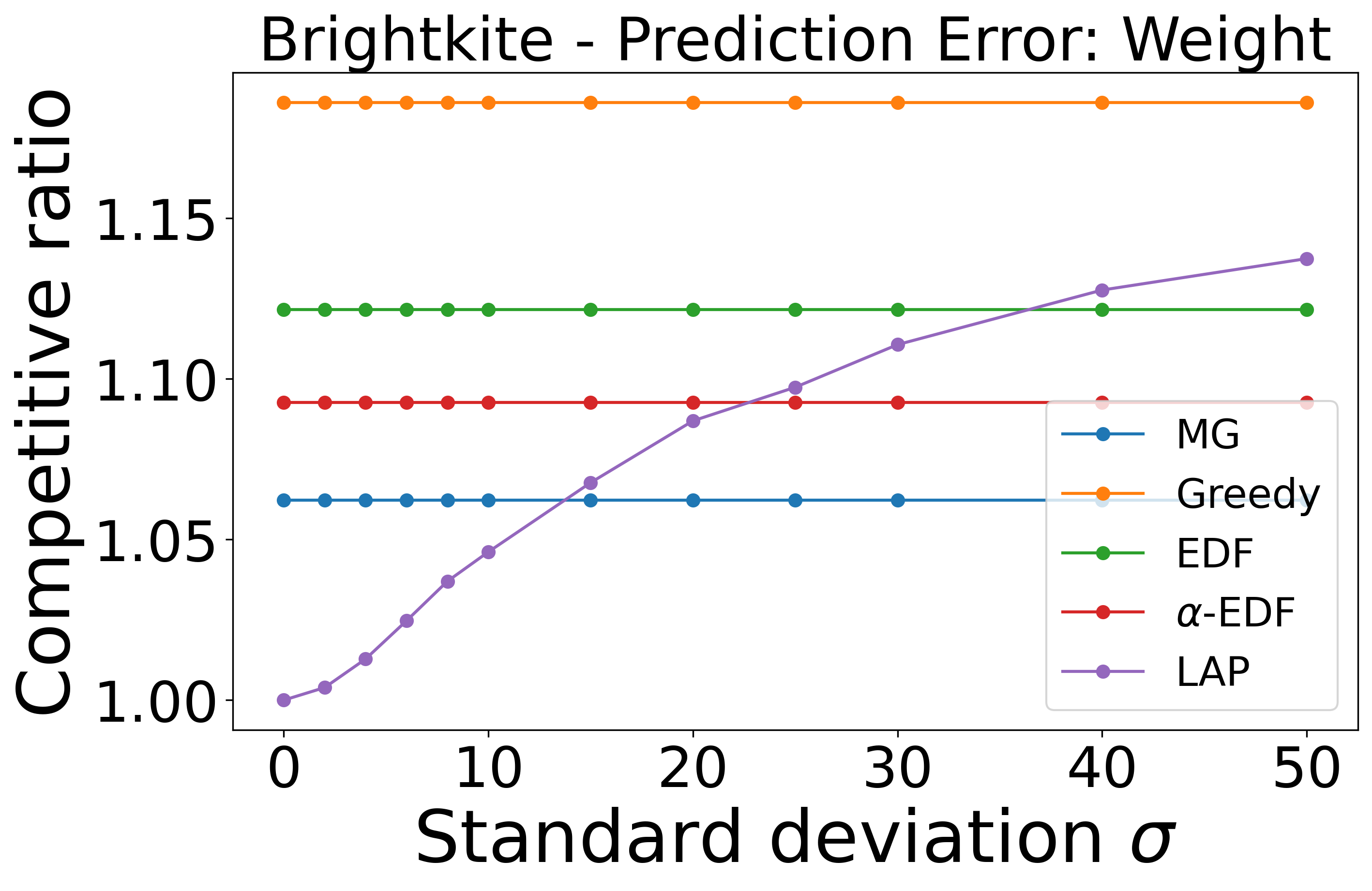}}
    \caption{\footnotesize{The competitive ratio achieved by our algorithm, \laps, and the benchmark algorithms, as a function of the error parameter $\sigma$.}}
    \label{fig:exp:1}
 \end{figure*}

\begin{figure*}[t!]
 	\centering
  \setlength{\belowcaptionskip}{-8pt}
    \subfigure{\includegraphics[width=0.235\textwidth]{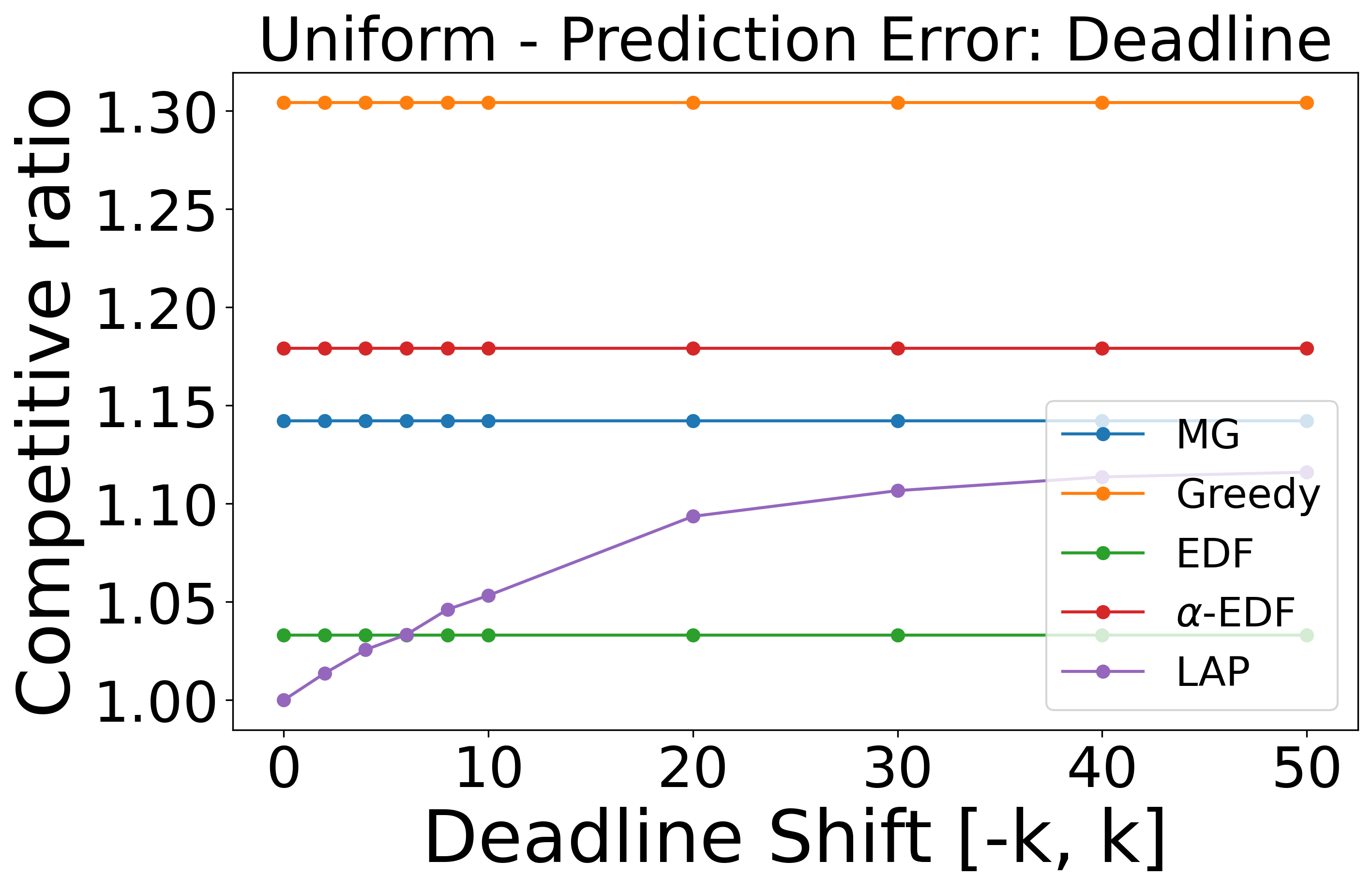}}
    \subfigure{\includegraphics[width=0.24\textwidth]{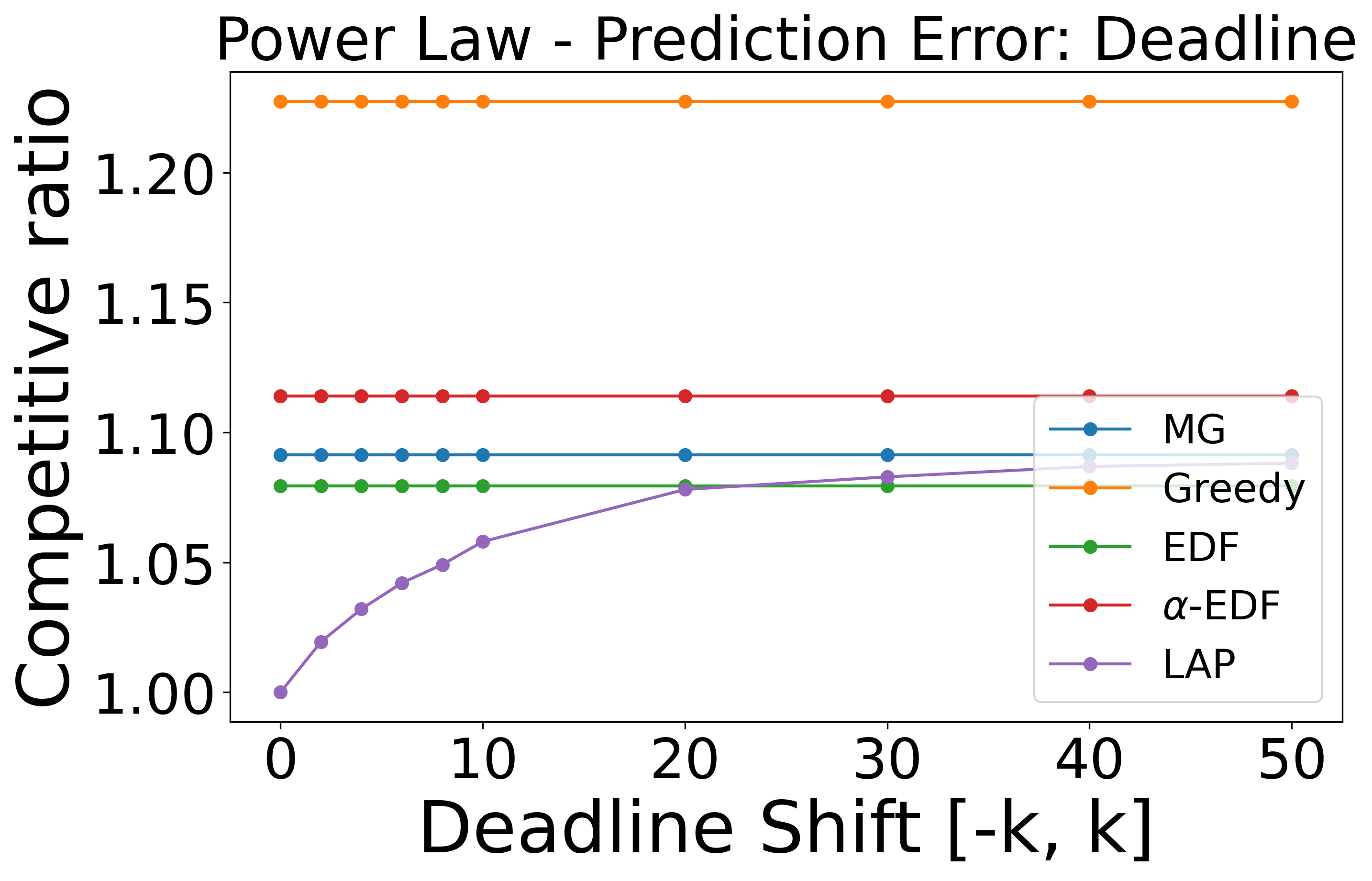}} 
    \subfigure{\includegraphics[width=0.24\textwidth]{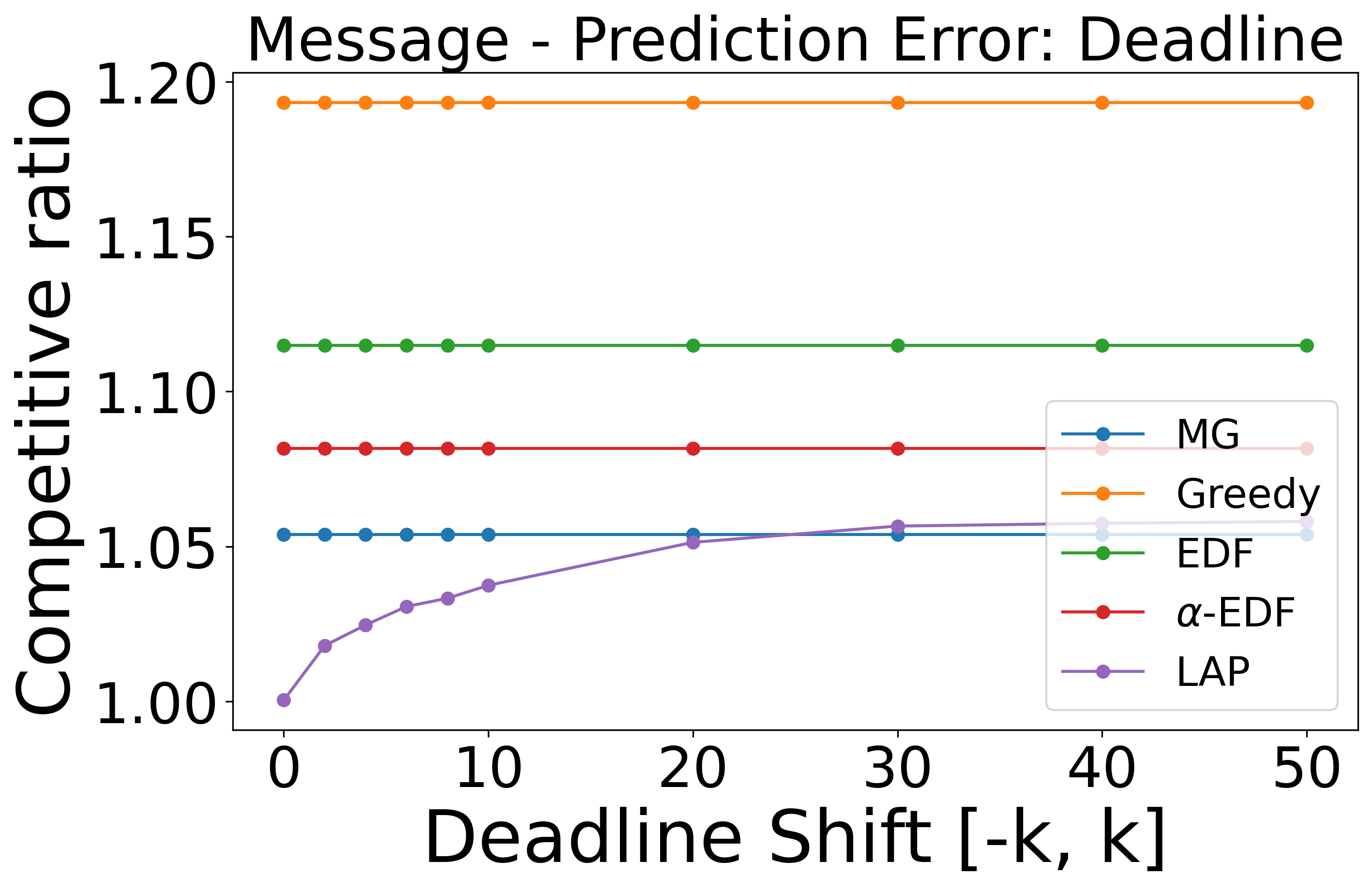}} 
    \subfigure{\includegraphics[width=0.24\textwidth]{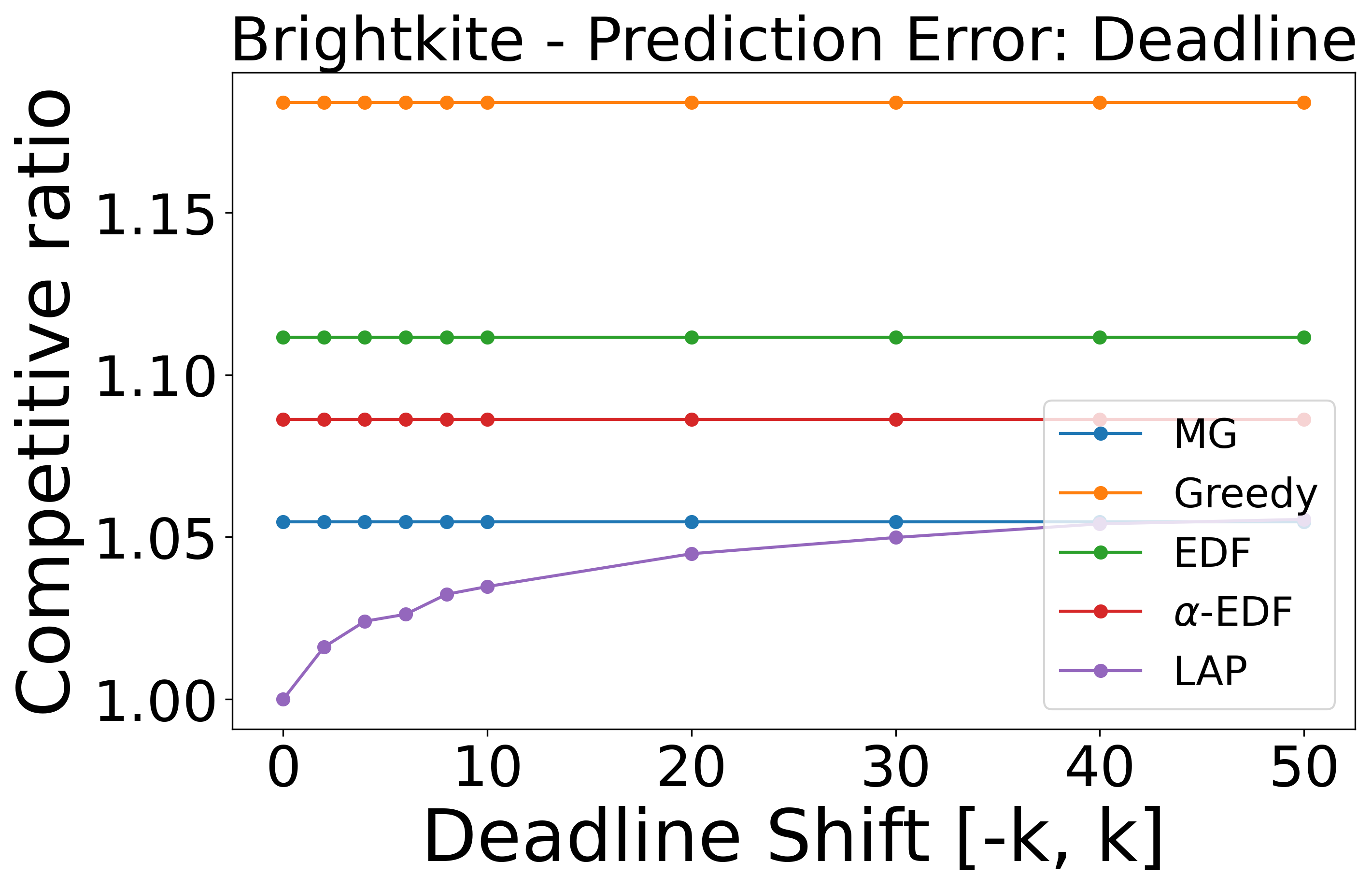}} 
    \caption{\footnotesize{The competitive ratio achieved by our algorithm, \laps, and the benchmark algorithms, as a function of deadline shift $[-k, k]$.}}
    \label{fig:exp:2}
\end{figure*}

For each synthetic dataset, the competitive ratio achieved by the different algorithms is averaged over 10 instances generated i.i.d. For the real datasets, the ratio is averaged over the arrivals for each of the 23 days and 10 days, respectively.

\noindent\textbf{Experiment set 1.} We first evaluate the performance of the algorithms as a function of the error parameter $\sigma$ on packet's weight. In Figure~\ref{fig:exp:1}, we observe that \laps \ outperforms all of the benchmarks when the error parameter is small. Also, the competitive ratio of \laps \ plateaus when the value of $\sigma$ increases, which is consistent with our bounded robustness guarantee.
As previously stated, \textbf{\textsc{EDF}} demonstrates effective performance across various scheduling scenarios. Consequently, we observe that it outperforms other benchmarks without any augmented predictions for synthetic datasets. Furthermore, when arrival times follow a power-law distribution, the performances of benchmarks become more comparable. This is due to one defining feature of the power-law distribution: its heavy tail, often expressed by the 80-20 rule. To be more specific, when the majority of jobs arrive infrequently, with occasional bursts, the differences in the performance of algorithms become less noticeable, except for \textbf{\textsc{Greedy}}. 
However, \textbf{\textsc{EDF}} encounters challenges in real-world scenarios, as illustrated on the right-hand side of Figure~\ref{fig:exp:1}. Given the impracticality of sending a very light earliest-deadline packet when heavy packets are prevalent, \textbf{\textsc{MG}} addresses this difficulty by intelligently identifying packets that merit transmission. It then selectively sends either the earliest-deadline or maximum-weight packet, depending on the ratio of their weights.

\noindent\textbf{Experiment set 2.} In the second set of experiments, we examine the impact of the shift on the packet's deadline. In Figure~\ref{fig:exp:2}, we observe that \laps \ outperforms all of the benchmarks until a large deadline shift. We also observe that the competitive ratio of \laps \ deteriorates as a function of the deadline shift $[-k, k]$, implying that there is a higher possibility of processing the wrong packet when enlarging the shift.

Overall, we observe that \laps\ outperforms all the benchmarks when the error parameter is small, both on packet weight and deadline shift, while still maintaining the competitive ratio when the error is large.

\section{Summary and open problems}
\label{sec:summary}

This paper explores learning-augmented online packet scheduling with deadlines, presenting a deterministic algorithm as its main contribution. The algorithm achieves optimality with accurate prediction and demonstrates improved competitive ratios with small errors, maintaining a constant competitive ratio even with large errors. Additionally, a novel method is introduced to analyze the algorithm by partitioning the time interval into two smaller intervals. A lower-bound tradeoff between consistency and robustness is also proposed.




Lastly, we introduce some open problems.
It would be worthwhile to design a new algorithm or error measure where the competitive ratio is entirely smooth in response to prediction errors. It would also be interesting to narrow the gap between robustness and the 2-competitive lower bound for any 1-consistent algorithm. 
Another direction is to explore the First-In-First-Out (FIFO) model with prediction. In the FIFO model, each packet has no deadline, but the buffer has limited capacity, and each packet must be transmitted in the order of its arrival.



\newpage
\bibliography{reference_deadlines}
\bibliographystyle{plainnat}
\newpage
\appendix
\appendix

\section{Missing proofs in \cref{sec:algo_deadlines}}
\label{app:algo}

\lemrobust*
\begin{proof}
For $c > 0$, a parameter that we choose later, we divide the proof into two distinct cases: (1) $W(\OPT(\J)^{(> \tl-1)}) > c W(\OPT(\J)^{(\le \tl-1)})$ and (2) $W(\OPT(\J)^{(> \tl-1)}) \le c W(\OPT(\J)^{(\le \tl-1)})$.
\begin{enumerate}
    \item $W(\OPT(\J)^{(> \tl-1)}) > c W(\OPT(\J)^{(\le \tl-1)}).$ 
    
    Let $\widetilde{\P} = \P^{(\le \tl-1)} \cap \OPT(\J)^{(> \tl-1)}$.
    In this case, we have $\OPT(\J)^{(> \tl-1)} \setminus \widetilde{\P}$ is a feasible candidate for $\P^{(> \tl-1)}$. Since \textsc{OnlineAlg} has a competitive ratio of $\gamma_{\text{on}}$, we have
    \begin{equation}\label{equ:2}
        \gamma_{\text{on}}W(\P^{(> \tl-1)}) \ge W(\OPT(\J)^{(> \tl-1)}) - W(\widetilde{\P}) \ .
    \end{equation}
    
    By applying the condition $W(\OPT(\J)^{(> \tl-1)}) > c W(\OPT(\J)^{(\le \tl-1)})$, we can obtain 
    \begin{align*}
        \frac{c+1}{c}\gamma_{\text{on}}W(\P) &= \frac{c+1}{c}\gamma_{\text{on}} (W(\P^{(\le \tl-1)})+ W(\P^{(> \tl-1)})) \\
        &\ge \frac{c+1}{c}\gamma_{\text{on}} W(\P^{(\le \tl-1)}) +  \frac{c+1}{c} (W(\OPT(\J)^{(> \tl-1)}) - W(\widetilde{\P})) \\
        & \ge \frac{c+1}{c}W(\OPT(\J)^{(> \tl-1)})\\
        &> W(\OPT(\J)^{(\le \tl-1)}) + W(\OPT(\J)^{(> \tl-1)})\\
        & = W(\OPT(\J))\ ,
    \end{align*}
    where the first inequality uses \eqref{equ:2}, the second inequality uses the fact that $\widetilde{\P} \subseteq \P^{(\le \tl-1)}$, and the last inequality uses the assumption of this case, where $W(\OPT(\J)^{(> \tl-1)}) > c W(\OPT(\J)^{(\le \tl-1)})$.

    \item $W(\OPT(\J)^{(> \tl-1)}) \le c W(\OPT(\J)^{(\le \tl-1)}).$ For this case, we further consider the following two cases:
    
    \begin{itemize}
        
        \item $W(\OPT(\J)^{(\le \tl-1)}) = W(\OPT(\J_{\le \tl-1})^{(\le \tl-1)}).$
        
        In this case, since \laps \ switches to \textsc{OnlineAlg} at time $\tl$, it satisfies that $\rho W(\P^{(\le \tl-1)}) \ge W(\OPT(\J_{\le \tl-1})^{(\le \tl-1)})$. Therefore, we can obtain the following result:
        
        \begin{align*}
            (c + 1)\rho  W(\P) 
            & \ge (c + 1) W(\OPT(\J_{\le \tl-1})^{(\le \tl-1)}) \\
            & = (c + 1) W(\OPT(\J)^{(\le \tl-1)}) \\
            & \ge W(\OPT(\J)^{(\le \tl-1)}) + W(\OPT(\J)^{(> \tl-1)})\\
            & = W(\OPT(\J))\ ,
        \end{align*}
        
        where the third inequality applies the condition $W(\OPT(\J)^{(> \tl-1)}) \le c W(\OPT(\J)^{(\le \tl-1)})$.
        
        \item $W(\OPT(\J)^{(\le \tl-1)}) > W(\OPT(\J_{\le \tl-1})^{(\le \tl-1)})$. 
        
        In this case, by Lemma~\ref{lem:maximize}, we have $\OPT(\J)^{(\le \tl-1)} \cap \OPT(\J_{\le \tl-1})^{(> \tl-1)} \neq \emptyset$. Let $\Tilde{\P} = \OPT(\J)^{(\le \tl-1)} \cap \OPT(\J_{\le \tl-1})^{(> \tl-1)}$. We first claim the following property:
        
        \begin{claim}\label{claim:local}
        $W(\OPT(\J_{\le \tl-1})^{(\le \tl-1)}) \ge W(\OPT(\J)^{(\le \tl-1)}) - W(\Tilde{\P})$.
        \end{claim}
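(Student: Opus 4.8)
The plan is to prove Claim~\ref{claim:local} by an exchange argument anchored on the optimality of $\OPT(\J_{\le \tl-1})$. The key observation is that every job in $\OPT(\J)^{(\le \tl-1)}$ is processed within $[0,\tl-1]$ by $\OPT(\J)$, so each such job is released by time $\tl-1$ (hence belongs to $\J_{\le \tl-1}$) and occupies a feasible slot in $[0,\tl-1]$. This makes $\OPT(\J)^{(\le \tl-1)}$ a legitimate candidate for the early portion of a schedule on the instance $\J_{\le \tl-1}$, and allows us to graft it onto the tail of $\OPT(\J_{\le \tl-1})$.

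Concretely, I would build a feasible schedule $\S^\star$ for $\J_{\le \tl-1}$ as follows: in the interval $[0,\tl-1]$ process exactly the jobs of $\OPT(\J)^{(\le \tl-1)}$ in the same time slots used by $\OPT(\J)$, and in the interval $(\tl-1,T]$ process the jobs of $\OPT(\J_{\le \tl-1})^{(> \tl-1)} \setminus \Tilde{\P}$ in the same slots used by $\OPT(\J_{\le \tl-1})$. I would then verify feasibility: the two time intervals are disjoint, each half reuses valid slot assignments inherited from an optimal schedule (dropping jobs only creates idle slots and never breaks a deadline), and no job is counted twice, because any job lying in both halves would belong to $\OPT(\J)^{(\le \tl-1)} \cap \OPT(\J_{\le \tl-1})^{(> \tl-1)} = \Tilde{\P}$, which is precisely what has been removed from the second half.

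Having established feasibility, I would apply the optimality of $\OPT(\J_{\le \tl-1})$ to the candidate $\S^\star$, using $\Tilde{\P} \subseteq \OPT(\J_{\le \tl-1})^{(> \tl-1)}$ to evaluate the weight of the spliced tail, to obtain
\[
W(\OPT(\J_{\le \tl-1})) \ge W(\OPT(\J)^{(\le \tl-1)}) + W(\OPT(\J_{\le \tl-1})^{(> \tl-1)}) - W(\Tilde{\P}).
\]
Expanding the left-hand side as $W(\OPT(\J_{\le \tl-1})^{(\le \tl-1)}) + W(\OPT(\J_{\le \tl-1})^{(> \tl-1)})$ and cancelling the common term $W(\OPT(\J_{\le \tl-1})^{(> \tl-1)})$ from both sides yields exactly the claimed inequality.

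The main obstacle I anticipate is the feasibility check for $\S^\star$, in particular arguing cleanly that splicing the early slots of one optimal schedule with the late slots of another never violates a deadline nor reuses a job. The crucial bookkeeping is that excising $\Tilde{\P}$ from the late portion is precisely the correction needed to avoid double-counting the jobs that $\OPT(\J)$ pulled into the early interval; once this disjointness is pinned down, the weight accounting and the final cancellation are routine.
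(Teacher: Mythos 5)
Your proposal is correct and is essentially the paper's own argument: both splice the early slots of $\OPT(\J)$ onto the late slots of $\OPT(\J_{\le \tl-1})$, subtract $W(\Tilde{\P})$ once to avoid double-counting the overlap, and invoke the optimality of $\OPT(\J_{\le \tl-1})$ (the paper phrases this as a contradiction and excises $\Tilde{\P}$ from the early block rather than the late one, but the candidate schedule has identical weight). Your explicit feasibility check for the spliced schedule is a welcome addition that the paper leaves implicit.
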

        
        \begin{proof}
            Assume $W(\OPT(\J_{\le \tl-1})^{(\le \tl-1)}) < W(\OPT(\J)^{(\le \tl-1)}) - W(\Tilde{\P})$. By replacing $\OPT(\J_{\le \tl-1})^{(\le \tl-1)}$ with  $\OPT(\J)^{(\le \tl-1)} \setminus \Tilde{\P}$, we have
            
            \begin{align*}
                & W(\OPT(\J)^{(\le \tl-1)} ) - W (\Tilde{\P}) +  W(\OPT(\J_{\le \tl-1})^{(> \tl-1)})\\ 
                & > W(\OPT(\J_{\le \tl-1})^{(\le \tl-1)}) + W(\OPT(\J_{\le \tl-1})^{(> \tl-1)})\\
                & > W(\OPT(\J_{\le \tl-1})) \ ,
            \end{align*} 
            
            which contradicts the optimality of $\OPT(\J_{\le \tl-1})$.
        \end{proof}
        
        Also, since \laps \ switches to \textsc{OnlineAlg} at time $\tl$, we have (i) $\rho W(\P^{(\le \tl-1)}) \ge W(\OPT(\J_{\le \tl-1})^{(\le \tl-1)})$. 
        Moreover, we have (ii) each job $j \in \Tilde{\P}$ is either in $\P^{(\le \tl-1)}$ or feasible for $\P^{(> \tl-1)}$. Specifically, we let $\Tilde{\P}_1 = \Tilde{\P} \cap \P^{(\le \tl-1)}$ and $\Tilde{\P}_2 = \Tilde{\P} \setminus \Tilde{\P}_1$. With the above definition, we next decompose $\P^{(\le \tl-1)}$ into $\P^{(\le \tl-1)^{-}}$ and $\Tilde{\P}_1$. By (i), we can obtain: 
        
        \begin{align*}
            \rho W(\P^{(\le \tl-1)}) &= \rho(W(\P^{(\le \tl-1)^{-}}) + W(\Tilde{\P}_1))\\ 
            &\ge W(\OPT(\J_{\le \tl-1})^{(\le \tl-1)}) \\ 
            & \ge W(\OPT(\J)^{(\le \tl-1)}) - W(\Tilde{\P}) \ ,
        \end{align*}
        
        where the first inequality follows from (i) and the second inequality follows from Claim~\ref{claim:local}. 
        
        Next, since \textsc{OnlineAlg} is $\gamma_{\texttt{on}}$-competitive, we have 
        
        \begin{equation}\label{equ:3}
            \gamma_{\texttt{on}}W(\P^{(> \tl-1)}) \ge W(\Tilde{\P}_2)\ .
        \end{equation}
        
        Finally, by combining the above two inequalities, and using the fact that $\rho + 1 > \gamma_{\texttt{on}}$, we can obtain
        
        \begin{align*}
            (\rho + 1) W(\P) & = \max\{\rho + 1, \gamma_{\texttt{on}}\}W(\P)\\ &\ge (\rho + 1) W(\P^{(\le \tl-1)}) + \gamma_{\texttt{on}}W(\P^{(> \tl-1)})\\ &\ge W(\OPT(\J)^{(\le \tl-1)}) - W(\Tilde{\P}) + W(\Tilde{\P}_1) + W(\Tilde{\P}_2)\\
            & \ge W(\OPT(\J)^{(\le \tl-1)}) \ ,
        \end{align*}
        
        where the second inequality uses the result $\rho W(\P^{(\le \tl-1)}) \ge W(\OPT(\J)^{(\le \tl-1)}) - W(\Tilde{\P})$ and \eqref{equ:3}, and the last inequality uses the fact that $\Tilde{\P} = \Tilde{\P}_1 \cup \Tilde{\P}_2$.
        
        Finally, by using the assumption $W(\OPT(\J)^{(> \tl-1)}) \le c W(\OPT(\J)^{(\le \tl-1)})$, we have the competitive ratio of this case is $(c+1)(\rho + 1)$.
    \end{itemize}
    
    By combining the above three cases, and we can derive the competitive ratio as $$\max\{\frac{\gamma_{\texttt{on}}(c+1)}{c}, (c+1)(\rho + 1)\}.$$ To optimize the competitive ratio, we set $c = \frac{\gamma_{\texttt{on}}}{\rho+1}$. Substituting this value into the equation yields a competitive ratio of $\gamma_{\texttt{on}} + \rho + 1$. This completes the proof.
\end{enumerate}

\end{proof}

\section{Missing proofs in \cref{sec:lower-bound}}
\label{app:lb}

\proplb*

\begin{proof}
Consider the following two problem instances: $\J_1$ and $\J_2$, where $\J_1 = \{(0, 1, \epsilon), (0, 2, 1)\}$ and $\J_2 = \{(0, 1, \epsilon), (0, 2, 1), (1, 2, 1)\}$, respectively.

Assume that, initially, a 1-consistent algorithm $\A$ is given $\J_1$ as the prediction. We first argue that, for any 1-consistent algorithm, it will process the job with weight $\epsilon$ at time $t = 0$; otherwise, if such an algorithm does not process it, the job will expire at time $t = 1$, leading to a contradiction.

After observing the decision made by algorithm $\A$, the adversary will choose the problem instance $\J_2$. Therefore, at time $t = 1$, algorithm $\A$ realizes that there is another arrival with the same deadline at $t =2$ and weight $1$, and the algorithm can only process one of the two.

On the other hand, the optimal solution for $\J_2$ would first process the job with weight $1$ at time $t = 0$, resulting in a total processing weight of $2$. Therefore, the competitive ratio would be $\frac{2}{1+\epsilon}$.
By letting the value of $\epsilon$ be very small, the ratio becomes $2$, and we conclude the proof.

\end{proof}

\thmlb*

\begin{proof}
Consider the following two problem instances: $\J_1$ and $\J_2$, where $\J_1 = \{(0, 1, \alpha + \epsilon), (0, 2, 1)\}$ and $\J_2 = \{(0, 1, \alpha + \epsilon), (0, 2, 1), (1, 2, 1)\}$, respectively.

Assume that, initially, a $1+\alpha$-consistent algorithm $\A$ is given $\J_1$ as the prediction. We first argue that, for any $1+\alpha$-consistent, it will process the job with weight $\alpha + \epsilon$ at time $t = 0$. Otherwise, if such an algorithm does not process this job, it will expire at time $t = 1$, resulting in a consistency value of $\frac{1 + \alpha + \epsilon}{1} > 1 + \alpha$, which leads to a contradiction.

Upon observing the decision made by algorithm $\A$ regarding the job with weight $\alpha + \epsilon$, the adversary will select problem instance $\J_2$. At time $t = 1$, the adversary releases another job, $j = (1, 2, 1)$. Consequently, at time $t = 1$, algorithm $\A$ realizes that there is another arrival with the same deadline and weight, and the algorithm can only process one of the two.

On the other hand, the optimal solution for $\J_2$ processes the job with weight $1$ at time $t = 0$, resulting in a total processing weight of $2$. Therefore, the competitive ratio becomes $\frac{2}{1 + \alpha} $ by letting the value of $\epsilon$ be very small, concluding the proof.
\end{proof}

\end{document}